\documentclass[12pt]{amsart}
\usepackage{amsmath,amssymb,amsthm}
\usepackage{mathtools}
\usepackage{stmaryrd} 
\usepackage[inline]{enumitem}
\setlength{\parindent}{0pt}
\setlength{\parskip}{.6em} 
\usepackage{graphics} 
\usepackage{setspace}
\usepackage{tikz}
\usetikzlibrary{arrows}
\usepackage{wasysym}
\usepackage[makeroom]{cancel}
\singlespacing
\usepackage{fullpage}
\usepackage{comment}
\usepackage{caption}
\usepackage{subcaption}
\usepackage{mathtools}
\usepackage{url}
\allowdisplaybreaks

\theoremstyle{plain}
\newtheorem{theorem}{Theorem}[section]
\newtheorem{proposition}[theorem]{Proposition}

\newtheorem{lemma}[theorem]{Lemma}

\newtheorem*{claim*}{Claim}
\theoremstyle{definition}
\newtheorem{remark}[theorem]{Remark}
\newtheorem{definition}[theorem]{Definition}

\baselineskip=17pt

\usepackage[utf8]{inputenc}


\DeclareUnicodeCharacter{03B1}{\ensuremath{\alpha}}
\DeclareUnicodeCharacter{03B9}{\ensuremath{\iota}}
\DeclareUnicodeCharacter{03B2}{\ensuremath{\beta}}
\DeclareUnicodeCharacter{03BA}{\ensuremath{\kappa}}
\DeclareUnicodeCharacter{03F0}{\ensuremath{\varkappa}}
\DeclareUnicodeCharacter{03C3}{\ensuremath{\sigma}}
\DeclareUnicodeCharacter{03B3}{\ensuremath{\gamma}}
\DeclareUnicodeCharacter{03BB}{\ensuremath{\lambda}}
\DeclareUnicodeCharacter{03B4}{\ensuremath{\delta}}
\DeclareUnicodeCharacter{03BC}{\ensuremath{\mu}} 
\DeclareUnicodeCharacter{00B5}{\ensuremath{\mu}} 
\DeclareUnicodeCharacter{03C4}{\ensuremath{\tau}}
\DeclareUnicodeCharacter{03BD}{\ensuremath{\nu}}
\DeclareUnicodeCharacter{03C5}{\ensuremath{\upsilon}}
\DeclareUnicodeCharacter{03F5}{\ensuremath{\epsilon}}
\DeclareUnicodeCharacter{03B5}{\ensuremath{\varepsilon}}
\DeclareUnicodeCharacter{03BE}{\ensuremath{\xi}}
\DeclareUnicodeCharacter{03B6}{\ensuremath{\zeta}}
\DeclareUnicodeCharacter{03D5}{\ensuremath{\phi}}
\DeclareUnicodeCharacter{03C6}{\ensuremath{\varphi}}
\DeclareUnicodeCharacter{03B7}{\ensuremath{\eta}}
\DeclareUnicodeCharacter{03C0}{\ensuremath{\pi}}
\DeclareUnicodeCharacter{03D6}{\ensuremath{\varpi}}
\DeclareUnicodeCharacter{03C7}{\ensuremath{\chi}}
\DeclareUnicodeCharacter{03B8}{\ensuremath{\theta}}
\DeclareUnicodeCharacter{03C8}{\ensuremath{\psi}}
\DeclareUnicodeCharacter{03D1}{\ensuremath{\vartheta}}
\DeclareUnicodeCharacter{03C1}{\ensuremath{\rho}}
\DeclareUnicodeCharacter{03F1}{\ensuremath{\varrho}}
\DeclareUnicodeCharacter{03C9}{\ensuremath{\omega}}
\DeclareUnicodeCharacter{0393}{\ensuremath{\Gamma}}
\DeclareUnicodeCharacter{039E}{\ensuremath{\Xi}}
\DeclareUnicodeCharacter{03A6}{\ensuremath{\Phi}}
\DeclareUnicodeCharacter{0394}{\ensuremath{\Delta}}
\DeclareUnicodeCharacter{03A0}{\ensuremath{\Pi}}
\DeclareUnicodeCharacter{03A8}{\ensuremath{\Psi}}
\DeclareUnicodeCharacter{0398}{\ensuremath{\Theta}}
\DeclareUnicodeCharacter{03A3}{\ensuremath{\Sigma}}
\DeclareUnicodeCharacter{03A9}{\ensuremath{\Omega}}
\DeclareUnicodeCharacter{039B}{\ensuremath{\Lambda}}


\DeclareUnicodeCharacter{2207}{\ensuremath{\nabla}}
\DeclareUnicodeCharacter{2202}{\ensuremath{\partial}}
\DeclareUnicodeCharacter{222C}{\ensuremath{\iint}}
\DeclareUnicodeCharacter{222D}{\ensuremath{\iiint}}
\DeclareUnicodeCharacter{2A0C}{\ensuremath{\iiiint}}
\DeclareUnicodeCharacter{222E}{\ensuremath{\oint}}
\DeclareUnicodeCharacter{222F}{\ensuremath{\oiint}}
\DeclareUnicodeCharacter{2230}{\ensuremath{\oiiint}}


\DeclareUnicodeCharacter{2254}{:=}
\DeclareUnicodeCharacter{2255}{=:}
\DeclareUnicodeCharacter{2203}{\ensuremath{\exists}}
\DeclareUnicodeCharacter{2200}{\ensuremath{\forall}}


\DeclareUnicodeCharacter{22A4}{\top}                    
\DeclareUnicodeCharacter{22A5}{\bot}                    
\DeclareUnicodeCharacter{2135}{\ensuremath{\aleph}}
\DeclareUnicodeCharacter{2211}{\ensuremath{\sum}}
\DeclareUnicodeCharacter{222B}{\ensuremath{\int}}
\DeclareUnicodeCharacter{220F}{\ensuremath{\prod}}
\DeclareUnicodeCharacter{2261}{\ensuremath{\equiv}}
\DeclareUnicodeCharacter{2020}{\ensuremath{\dagger}} 
\DeclareUnicodeCharacter{00AC}{\ensuremath{\neg}}
\DeclareUnicodeCharacter{33D1}{\ensuremath{\ln}}
\DeclareUnicodeCharacter{33D2}{\ensuremath{\log}}
\DeclareUnicodeCharacter{221D}{\propto} 
\DeclareUnicodeCharacter{211C}{\ensuremath{\Re}}
\DeclareUnicodeCharacter{2111}{\ensuremath{\Im}}
\DeclareUnicodeCharacter{00A0}{~}
\DeclareUnicodeCharacter{202F}{\,}
\DeclareUnicodeCharacter{230A}{\ensuremath{\lfloor}}
\DeclareUnicodeCharacter{230B}{\ensuremath{\rfloor}}
\DeclareUnicodeCharacter{221A}{\ensuremath{\sqrt}}
\DeclareUnicodeCharacter{221B}{\ensuremath{\sqrt[3]}}
\DeclareUnicodeCharacter{221C}{\ensuremath{\sqrt[4]}}
\DeclareUnicodeCharacter{00D7}{\ensuremath{\times}}
\DeclareUnicodeCharacter{00F7}{\ensuremath{\div}}
\DeclareUnicodeCharacter{00B1}{\ensuremath{\pm}}
\DeclareUnicodeCharacter{2213}{\ensuremath{\mp}}
\DeclareUnicodeCharacter{2215}{\ensuremath{/}}
\DeclareUnicodeCharacter{2212}{\ensuremath{-}}
\DeclareUnicodeCharacter{221E}{\ensuremath{\infty}}
\DeclareUnicodeCharacter{2228}{\ensuremath{\vee}}
\DeclareUnicodeCharacter{2227}{\ensuremath{\wedge}}


\DeclareUnicodeCharacter{22C1}{\ensuremath{\bigvee}}
\DeclareUnicodeCharacter{22C0}{\ensuremath{\bigwedge}}
\DeclareUnicodeCharacter{22C3}{\ensuremath{\bigcup}}
\DeclareUnicodeCharacter{22C2}{\ensuremath{\bigcap}}
\DeclareUnicodeCharacter{2A00}{\ensuremath{\bigodot}}
\DeclareUnicodeCharacter{2A01}{\ensuremath{\bigoplus}}
\DeclareUnicodeCharacter{2A02}{\ensuremath{\bigotimes}}


\DeclareUnicodeCharacter{2260}{\ensuremath{\neq}}
\DeclareUnicodeCharacter{2248}{\ensuremath{\approx}}
\DeclareUnicodeCharacter{2264}{\ensuremath{\leq}}
\DeclareUnicodeCharacter{2265}{\ensuremath{\geq}}
\DeclareUnicodeCharacter{226A}{\ensuremath{\ll}}
\DeclareUnicodeCharacter{226B}{\ensuremath{\gg}}
\DeclareUnicodeCharacter{226D}{\not\equiv}              
\DeclareUnicodeCharacter{2270}{\not\le}                 
\DeclareUnicodeCharacter{2271}{\not\ge}                 
\DeclareUnicodeCharacter{2272}{\lesssim}                
\DeclareUnicodeCharacter{2273}{\gtrsim}                 
\DeclareUnicodeCharacter{227A}{\prec}                   
\DeclareUnicodeCharacter{2282}{\subset}                 
\DeclareUnicodeCharacter{2286}{\subseteq}               


\DeclareUnicodeCharacter{220B}{\ensuremath{\ni}}
\DeclareUnicodeCharacter{2205}{\ensuremath{\emptyset}}
\DeclareUnicodeCharacter{2208}{\ensuremath{\in}}
\DeclareUnicodeCharacter{2282}{\ensuremath{\subset}}
\DeclareUnicodeCharacter{2283}{\ensuremath{\supset}}
\DeclareUnicodeCharacter{2286}{\ensuremath{\subseteq}}
\DeclareUnicodeCharacter{2287}{\ensuremath{\supseteq}}
\DeclareUnicodeCharacter{2229}{\ensuremath{\cap}}
\DeclareUnicodeCharacter{222A}{\ensuremath{\cup}}
\DeclareUnicodeCharacter{2288}{\ensuremath{\nsubseteq}} 


\DeclareUnicodeCharacter{2026}{\ifmmode\ldots\else\textellipsis\fi} 
\DeclareUnicodeCharacter{22C5}{\ensuremath{\cdot}}
\DeclareUnicodeCharacter{22EE}{\ensuremath{\vdots}}                  


\DeclareUnicodeCharacter{21D2}{\ensuremath{\Rightarrow}}
\DeclareUnicodeCharacter{21D0}{\ensuremath{\Leftarrow}}
\DeclareUnicodeCharacter{21D4}{\ensuremath{\Leftrightarrow}}
\DeclareUnicodeCharacter{2192}{\ensuremath{\to}}
\DeclareUnicodeCharacter{2190}{\ensuremath{\gets}}
\DeclareUnicodeCharacter{21A6}{\ensuremath{\mapsto}}


\DeclareUnicodeCharacter{20AC}{\EUR}


\DeclareUnicodeCharacter{2070}{\ensuremath{^0}}
\DeclareUnicodeCharacter{00B9}{\ensuremath{^1}}
\DeclareUnicodeCharacter{00B2}{\ensuremath{^2}}
\DeclareUnicodeCharacter{00B3}{\ensuremath{^3}}
\DeclareUnicodeCharacter{2074}{\ensuremath{^4}}
\DeclareUnicodeCharacter{2075}{\ensuremath{^5}}
\DeclareUnicodeCharacter{2076}{\ensuremath{^6}}
\DeclareUnicodeCharacter{2077}{\ensuremath{^7}}
\DeclareUnicodeCharacter{2078}{\ensuremath{^8}}
\DeclareUnicodeCharacter{2079}{\ensuremath{^9}}
\DeclareUnicodeCharacter{207A}{\ensuremath{^+}}
\DeclareUnicodeCharacter{207B}{\ensuremath{^-}}
\DeclareUnicodeCharacter{207C}{\ensuremath{^=}}
\DeclareUnicodeCharacter{207D}{\ensuremath{^(}}
\DeclareUnicodeCharacter{207E}{\ensuremath{^)}}
\DeclareUnicodeCharacter{2080}{\ensuremath{_0}}
\DeclareUnicodeCharacter{2081}{\ensuremath{_1}}
\DeclareUnicodeCharacter{2082}{\ensuremath{_2}}
\DeclareUnicodeCharacter{2083}{\ensuremath{_3}}
\DeclareUnicodeCharacter{2084}{\ensuremath{_4}}
\DeclareUnicodeCharacter{2085}{\ensuremath{_5}}
\DeclareUnicodeCharacter{2086}{\ensuremath{_6}}
\DeclareUnicodeCharacter{2087}{\ensuremath{_7}}
\DeclareUnicodeCharacter{2088}{\ensuremath{_8}}
\DeclareUnicodeCharacter{2089}{\ensuremath{_9}}
\DeclareUnicodeCharacter{208A}{\ensuremath{_+}}
\DeclareUnicodeCharacter{208B}{\ensuremath{_-}}
\DeclareUnicodeCharacter{208C}{\ensuremath{_=}}
\DeclareUnicodeCharacter{208D}{\ensuremath{_(}}
\DeclareUnicodeCharacter{208E}{\ensuremath{_)}}


\DeclareUnicodeCharacter{2096}{_k}
\DeclareUnicodeCharacter{2097}{_l}
\DeclareUnicodeCharacter{2098}{_m}
\DeclareUnicodeCharacter{2099}{_n}
\DeclareUnicodeCharacter{2099}{_n}
\DeclareUnicodeCharacter{209B}{_s}
\DeclareUnicodeCharacter{209C}{_t}
\DeclareUnicodeCharacter{1D64}{_u}


\DeclareUnicodeCharacter{1D538}{\ensuremath{\mathbb{A}}}
\DeclareUnicodeCharacter{1D539}{\ensuremath{\mathbb{B}}}
\DeclareUnicodeCharacter{02102}{\ensuremath{\mathbb{C}}}
\DeclareUnicodeCharacter{1D53B}{\ensuremath{\mathbb{D}}}
\DeclareUnicodeCharacter{1D53C}{\ensuremath{\mathbb{E}}}
\DeclareUnicodeCharacter{1D53D}{\ensuremath{\mathbb{F}}}
\DeclareUnicodeCharacter{1D53E}{\ensuremath{\mathbb{G}}}
\DeclareUnicodeCharacter{0210D}{\ensuremath{\mathbb{H}}}
\DeclareUnicodeCharacter{1D540}{\ensuremath{\mathbb{I}}}
\DeclareUnicodeCharacter{1D541}{\ensuremath{\mathbb{J}}}
\DeclareUnicodeCharacter{1D542}{\ensuremath{\mathbb{K}}}
\DeclareUnicodeCharacter{1D543}{\ensuremath{\mathbb{L}}}
\DeclareUnicodeCharacter{1D544}{\ensuremath{\mathbb{M}}}
\DeclareUnicodeCharacter{02115}{\ensuremath{\mathbb{N}}}
\DeclareUnicodeCharacter{1D546}{\ensuremath{\mathbb{O}}}
\DeclareUnicodeCharacter{02119}{\ensuremath{\mathbb{P}}}
\DeclareUnicodeCharacter{0211A}{\ensuremath{\mathbb{Q}}}
\DeclareUnicodeCharacter{0211D}{\ensuremath{\mathbb{R}}}
\DeclareUnicodeCharacter{1D54A}{\ensuremath{\mathbb{S}}}
\DeclareUnicodeCharacter{1D54B}{\ensuremath{\mathbb{T}}}
\DeclareUnicodeCharacter{1D54C}{\ensuremath{\mathbb{U}}}
\DeclareUnicodeCharacter{1D54D}{\ensuremath{\mathbb{V}}}
\DeclareUnicodeCharacter{1D54E}{\ensuremath{\mathbb{W}}}
\DeclareUnicodeCharacter{1D54F}{\ensuremath{\mathbb{X}}}
\DeclareUnicodeCharacter{1D550}{\ensuremath{\mathbb{Y}}}
\DeclareUnicodeCharacter{02124}{\ensuremath{\mathbb{Z}}}
\DeclareUnicodeCharacter{1D552}{\ensuremath{\mathbb{a}}}
\DeclareUnicodeCharacter{1D553}{\ensuremath{\mathbb{b}}}
\DeclareUnicodeCharacter{1D554}{\ensuremath{\mathbb{c}}}
\DeclareUnicodeCharacter{1D555}{\ensuremath{\mathbb{d}}}
\DeclareUnicodeCharacter{1D556}{\ensuremath{\mathbb{e}}}
\DeclareUnicodeCharacter{1D557}{\ensuremath{\mathbb{f}}}
\DeclareUnicodeCharacter{1D558}{\ensuremath{\mathbb{g}}}
\DeclareUnicodeCharacter{1D559}{\ensuremath{\mathbb{h}}}
\DeclareUnicodeCharacter{1D55A}{\ensuremath{\mathbb{i}}}
\DeclareUnicodeCharacter{1D55B}{\ensuremath{\mathbb{j}}}
\DeclareUnicodeCharacter{1D55C}{\ensuremath{\mathbb{k}}}
\DeclareUnicodeCharacter{1D55D}{\ensuremath{\mathbb{l}}}
\DeclareUnicodeCharacter{1D55E}{\ensuremath{\mathbb{m}}}
\DeclareUnicodeCharacter{1D55F}{\ensuremath{\mathbb{n}}}
\DeclareUnicodeCharacter{1D560}{\ensuremath{\mathbb{o}}}
\DeclareUnicodeCharacter{1D561}{\ensuremath{\mathbb{p}}}
\DeclareUnicodeCharacter{1D562}{\ensuremath{\mathbb{q}}}
\DeclareUnicodeCharacter{1D563}{\ensuremath{\mathbb{r}}}
\DeclareUnicodeCharacter{1D564}{\ensuremath{\mathbb{s}}}
\DeclareUnicodeCharacter{1D565}{\ensuremath{\mathbb{t}}}
\DeclareUnicodeCharacter{1D566}{\ensuremath{\mathbb{u}}}
\DeclareUnicodeCharacter{1D567}{\ensuremath{\mathbb{v}}}
\DeclareUnicodeCharacter{1D568}{\ensuremath{\mathbb{w}}}
\DeclareUnicodeCharacter{1D569}{\ensuremath{\mathbb{x}}}
\DeclareUnicodeCharacter{1D56A}{\ensuremath{\mathbb{y}}}
\DeclareUnicodeCharacter{1D56B}{\ensuremath{\mathbb{z}}}
\DeclareUnicodeCharacter{1D7D8}{\ensuremath{\mathbb{0}}}
\DeclareUnicodeCharacter{1D7D9}{\ensuremath{\mathbb{1}}}
\DeclareUnicodeCharacter{1D7DA}{\ensuremath{\mathbb{2}}}
\DeclareUnicodeCharacter{1D7DB}{\ensuremath{\mathbb{3}}}
\DeclareUnicodeCharacter{1D7DC}{\ensuremath{\mathbb{4}}}
\DeclareUnicodeCharacter{1D7DD}{\ensuremath{\mathbb{5}}}
\DeclareUnicodeCharacter{1D7DE}{\ensuremath{\mathbb{6}}}
\DeclareUnicodeCharacter{1D7DF}{\ensuremath{\mathbb{7}}}
\DeclareUnicodeCharacter{1D7E0}{\ensuremath{\mathbb{8}}}
\DeclareUnicodeCharacter{1D7E1}{\ensuremath{\mathbb{9}}}


\DeclareUnicodeCharacter{1D504}{\ensuremath{\mathfrak{A}}}
\DeclareUnicodeCharacter{1D505}{\ensuremath{\mathfrak{B}}}
\DeclareUnicodeCharacter{1D507}{\ensuremath{\mathfrak{D}}}
\DeclareUnicodeCharacter{1D508}{\ensuremath{\mathfrak{E}}}
\DeclareUnicodeCharacter{1D509}{\ensuremath{\mathfrak{F}}}
\DeclareUnicodeCharacter{1D50A}{\ensuremath{\mathfrak{G}}}
\DeclareUnicodeCharacter{1D50D}{\ensuremath{\mathfrak{J}}}
\DeclareUnicodeCharacter{1D50E}{\ensuremath{\mathfrak{K}}}
\DeclareUnicodeCharacter{1D50F}{\ensuremath{\mathfrak{L}}}
\DeclareUnicodeCharacter{1D510}{\ensuremath{\mathfrak{M}}}
\DeclareUnicodeCharacter{1D511}{\ensuremath{\mathfrak{N}}}
\DeclareUnicodeCharacter{1D512}{\ensuremath{\mathfrak{O}}}
\DeclareUnicodeCharacter{1D513}{\ensuremath{\mathfrak{P}}}
\DeclareUnicodeCharacter{1D514}{\ensuremath{\mathfrak{Q}}}
\DeclareUnicodeCharacter{1D516}{\ensuremath{\mathfrak{S}}}
\DeclareUnicodeCharacter{1D517}{\ensuremath{\mathfrak{T}}}
\DeclareUnicodeCharacter{1D518}{\ensuremath{\mathfrak{U}}}
\DeclareUnicodeCharacter{1D519}{\ensuremath{\mathfrak{V}}}
\DeclareUnicodeCharacter{1D51A}{\ensuremath{\mathfrak{W}}}
\DeclareUnicodeCharacter{1D51B}{\ensuremath{\mathfrak{X}}}
\DeclareUnicodeCharacter{1D51C}{\ensuremath{\mathfrak{Y}}}


\DeclareUnicodeCharacter{1D4D0}{\ensuremath{\mathcal{A}}}
\DeclareUnicodeCharacter{1D4D1}{\ensuremath{\mathcal{B}}}
\DeclareUnicodeCharacter{1D4D2}{\ensuremath{\mathcal{C}}}
\DeclareUnicodeCharacter{1D4D3}{\ensuremath{\mathcal{D}}}
\DeclareUnicodeCharacter{1D4D4}{\ensuremath{\mathcal{E}}}
\DeclareUnicodeCharacter{1D4D5}{\ensuremath{\mathcal{F}}}
\DeclareUnicodeCharacter{1D4D6}{\ensuremath{\mathcal{G}}}
\DeclareUnicodeCharacter{1D4D7}{\ensuremath{\mathcal{H}}}
\DeclareUnicodeCharacter{1D4D8}{\ensuremath{\mathcal{I}}}
\DeclareUnicodeCharacter{1D4D9}{\ensuremath{\mathcal{J}}}
\DeclareUnicodeCharacter{1D4DA}{\ensuremath{\mathcal{K}}}
\DeclareUnicodeCharacter{1D4DB}{\ensuremath{\mathcal{L}}}
\DeclareUnicodeCharacter{1D4DC}{\ensuremath{\mathcal{M}}}
\DeclareUnicodeCharacter{1D4DD}{\ensuremath{\mathcal{N}}}
\DeclareUnicodeCharacter{1D4DE}{\ensuremath{\mathcal{O}}}
\DeclareUnicodeCharacter{1D4DF}{\ensuremath{\mathcal{P}}}
\DeclareUnicodeCharacter{1D4E0}{\ensuremath{\mathcal{Q}}}
\DeclareUnicodeCharacter{1D4E1}{\ensuremath{\mathcal{R}}}
\DeclareUnicodeCharacter{1D4E2}{\ensuremath{\mathcal{S}}}
\DeclareUnicodeCharacter{1D4E3}{\ensuremath{\mathcal{T}}}
\DeclareUnicodeCharacter{1D4E4}{\ensuremath{\mathcal{U}}}
\DeclareUnicodeCharacter{1D4E5}{\ensuremath{\mathcal{V}}}
\DeclareUnicodeCharacter{1D4E6}{\ensuremath{\mathcal{W}}}
\DeclareUnicodeCharacter{1D4E7}{\ensuremath{\mathcal{X}}}
\DeclareUnicodeCharacter{1D4E8}{\ensuremath{\mathcal{Y}}}
\DeclareUnicodeCharacter{1D4E9}{\ensuremath{\mathcal{Z}}}


\DeclareUnicodeCharacter{1D43}{^a}
\DeclareUnicodeCharacter{1D47}{^b}
\DeclareUnicodeCharacter{1D9C}{^c}
\DeclareUnicodeCharacter{1D48}{^d}
\DeclareUnicodeCharacter{1D49}{^e}
\DeclareUnicodeCharacter{1DA0}{^f}
\DeclareUnicodeCharacter{1D4D}{^g}
\DeclareUnicodeCharacter{02B0}{^h}
\DeclareUnicodeCharacter{2071}{^i}
\DeclareUnicodeCharacter{02B2}{^j}
\DeclareUnicodeCharacter{1D4F}{^k}
\DeclareUnicodeCharacter{02E1}{^l}
\DeclareUnicodeCharacter{1D50}{^m}
\DeclareUnicodeCharacter{207F}{^n}
\DeclareUnicodeCharacter{1D52}{^o}
\DeclareUnicodeCharacter{1D56}{^p}
\DeclareUnicodeCharacter{02B3}{^r}
\DeclareUnicodeCharacter{02E2}{^s}
\DeclareUnicodeCharacter{1D57}{^t}
\DeclareUnicodeCharacter{1D58}{^u}
\DeclareUnicodeCharacter{1D5B}{^v}
\DeclareUnicodeCharacter{02B7}{^w}
\DeclareUnicodeCharacter{02E3}{^x}
\DeclareUnicodeCharacter{02B8}{^y}
\DeclareUnicodeCharacter{1DBB}{^z}
\DeclareUnicodeCharacter{1D2C}{^A}
\DeclareUnicodeCharacter{1D2E}{^B}
\DeclareUnicodeCharacter{1D30}{^D}
\DeclareUnicodeCharacter{1D31}{^E}
\DeclareUnicodeCharacter{1D33}{^G}
\DeclareUnicodeCharacter{1D34}{^H}
\DeclareUnicodeCharacter{1D35}{^I}
\DeclareUnicodeCharacter{1D36}{^J}
\DeclareUnicodeCharacter{1D37}{^K}
\DeclareUnicodeCharacter{1D38}{^L}
\DeclareUnicodeCharacter{1D39}{^M}
\DeclareUnicodeCharacter{1D3A}{^N}
\DeclareUnicodeCharacter{1D3C}{^O}
\DeclareUnicodeCharacter{1D3E}{^P}
\DeclareUnicodeCharacter{1D3F}{^R}
\DeclareUnicodeCharacter{1D40}{^T}
\DeclareUnicodeCharacter{1D41}{^U}
\DeclareUnicodeCharacter{1D42}{^W}


\DeclareUnicodeCharacter{1D49E}{\mathcal{C}}            
\DeclareUnicodeCharacter{1D49F}{\mathcal{D}}            
\DeclareUnicodeCharacter{1D4BB}{\mathcal{f}}            
\DeclareUnicodeCharacter{1D4DB}{\mathcal{L}}            
\DeclareUnicodeCharacter{1D4E3}{\mathcal{T}}            


\DeclareUnicodeCharacter{2308}{\left\lceil}
\DeclareUnicodeCharacter{2309}{\right\rceil}
\DeclareUnicodeCharacter{1D62}{\ensuremath{_i}}
\DeclareUnicodeCharacter{00B7}{\ensuremath{\cdot}}
\DeclareUnicodeCharacter{223C}{\ensuremath{\sim}}
\DeclareUnicodeCharacter{00D5}{\ifmmode\tilde{O}\else\~{O}\fi}
\DeclareUnicodeCharacter{2245}{\ensuremath{\cong}}
\DeclareUnicodeCharacter{22EF}{\ensuremath{\cdots}}
\DeclareUnicodeCharacter{2C7C}{\ensuremath{_j}}
\DeclareUnicodeCharacter{209A}{\ensuremath{_p}}
\DeclareUnicodeCharacter{2297}{\ensuremath{\otimes}}
\DeclareUnicodeCharacter{21AA}{\xhookrightarrow{}}
\DeclareUnicodeCharacter{2016}{\Vert}
\DeclareUnicodeCharacter{2113}{\ell}

\DeclareUnicodeCharacter{2223}{\mid}
\DeclareUnicodeCharacter{2224}{\nmid}

\DeclareUnicodeCharacter{220B}{\ensuremath{\ni}}
\DeclareUnicodeCharacter{2209}{\not\in}


\DeclareMathOperator{\Z}{\mathbb{Z}}

\DeclareMathOperator{\E}{\mathbb{E}}




\DeclareMathOperator{\Var}{Var}


\newcommand{\norm}[1]{\left\lVert#1\right\rVert}






\newif\ifshowold 
\showoldfalse 
\title{The Polynomial Learning With Errors Problem and the Smearing Condition}


\author [L. Babinkostova] {L.\ Babinkostova $^1$}
\address{$^1$ Department of Mathematics, Boise State University, Boise, ID 83714.}

\author[A. Chin]{A.\ Chin $^2$}
\address{$^2$  Department of Mathematics, University of California, Berkeley, Berkeley, CA 94712.}

\author[A. Kirtland ]{A.\ Kirtland $^3$ }
\address{ $^3$  Department of Mathematics, Washington University in St. Louis, St. Louis, MO 63130.}

\author[V. Nazarchuk]{V. \ Nazarchuk $^4$}
\address{$^4$  Department of Mathematics, Yale University, New Haven, CT 06520.}

\author[E. Plotnick]{E. \ Plotnick $^5$}
\address{$^5$  Department of Mathematics, Harvard University, Cambridge, MA 02138.}

\thanks{Supported by the National Science Foundation under the grant number DMS-1659872.}
\thanks{$^{\S}$ Corresponding Author: liljanababinkostova@boisestate.edu}
\subjclass[2010]{06B05, 11T71, 81P94, 11Y16, 11Z05, 62A01} 
\keywords{Learning with Errors, Ring Learning with Errors, Polynomial Learning with Errors, Smearing, Lattices, Coupon Collector's Problem}
\begin{document}

\begin{abstract}
As quantum computing advances rapidly, guaranteeing the security of cryptographic protocols resistant to quantum attacks is paramount. Some leading candidate cryptosystems use the Learning with Errors (LWE) problem, attractive for its simplicity and hardness guaranteed by reductions from hard computational lattice problems. Its algebraic variants, Ring-Learning with Errors (RLWE) and Polynomial Learning with Errors (PLWE), gain in efficiency over standard LWE, but their security remains to be thoroughly investigated. In this work, we consider the ``smearing'' condition, a condition for attacks on PLWE and RLWE 
introduced in \cite {provably weak}.
We expand upon some questions about smearing posed by Elias et al. in \cite{provably weak}  and show how smearing is related to the Coupon Collector's Problem. Furthermore, we develop some practical algorithms for calculating probabilities related to smearing. Finally, we present a smearing-based attack on PLWE, and demonstrate its effectiveness.
\end{abstract}
\maketitle
\section{Introduction}
Quantum computing promises to be a game-changing technology, as many problems that are considered intractable for conventional computers could be solved efficiently by harnessing properties of quantum physics to represent information. While quantum computing provides new methods to approach complex computing problems, it can also be used as a powerful tool to break existing cryptographic security. 

There are currently two groundbreaking quantum algorithms which break today's conventional cryptosystems. In 1994, Shor \cite{Shor} proposed an efficient polynomial-time quantum algorithm for solving the integer factorization and discrete log problems. Indeed, this algorithm breaks much of public key cryptography, as many widely-used public key cryptosystems rely on the difficulty of integer factorization and elliptic curve variants of the discrete logarithm problem, both of which have no known polynomial-time solution with conventional computing. 
In 1996, Grover \cite{G} proposed a quantum algorithm that provides a quadratic speed up over classical algorithms for searching a key space, weakening the security of symmetric key cryptosystems which rely on the difficulty of guessing a random shared key. 

To address this issue, in 2016 the National Institute of Standard and Technology \cite{NIST} announced the need to replace cryptosystems and standards based on vulnerable problems with post-quantum cryptography alternatives. 
A promising avenue in post-quantum cryptography is \textit{lattice-based cryptography}, cryptography based on well-studied computational problems on lattices which have no known efficient solution with either classical or quantum computing. Some lattice-based cryptography relies on the Learning With Errors (LWE) problem, introduced by O. Regev \cite{LP1} in 2005, which exploits the difficulty of solving a ``noisy" linear system modulo a known integer. Regev also proved a reduction from worst-case computational lattice problems to LWE, affirming its difficulty and making LWE a strong candidate to base cryptographic systems.

The basic search LWE problem takes the form of a linear system hiding a secret integer vector $s$, with integer coefficient vector $a$, and integer error vector $e$, modulo some integer $q$:
\begin{align*}
    a_1s_1 + e_1 &\equiv c_1\mod{q}\\
    a_2s_2 + e_2 &\equiv c_2\mod{q}\\
    a_3s_3 + e_3 &\equiv c_3\mod{q}\\
    &\vdots
\end{align*}
While Gaussian elimination makes this system easy to solve with known $a_i$, $e_i$, and $c_i$, the introduction of the unknown noise $e_i$ makes an easy linear system extremely difficult to solve. Even with small noise, the traditional process of Gaussian elimination magnifies noise to the point of rendering the modular linear system unsolvable.

The decision LWE problem is to distinguish with non-negligible advantage between a uniform distribution and a distribution over the noisy inner products $(a,as+e)$ (where $a$ is sampled uniformly at random). Since its introduction, the conjectured hardness of LWE \cite{LP1} has already been used as a building block for many cryptographic applications: in efficient signature schemes \cite{W}, fully-homomorphic encryption schemes \cite{BV}, pseudo-random functions \cite{B}, and protocols for secure multi-party computation \cite{DPA}, and it also validates the hardness of the NTRU cryptosystem \cite{HPS}. 

The ``algebraically structured" variants, called Ring LWE (RLWE) \cite{MR}, Polynomial LWE (PLWE) \cite{LP}, and Module LWE \cite{A2} (drawing values from any ring of integers, polynomial rings, and modules in place of the set of integers respectively), offer more succinct representations of information. 


While the hardness of RLWE relies on the conjectured hardness of computational lattice problems over a restricted set of lattices (called \textit{ideal} lattices) \cite{on ideal lattices}, its construction has inherent algebraic structure, which could make it vulnerable to algebraic attacks. In this paper we consider attacks against the PLWE problem.


We analyze a condition for attacks against the PLWE problem called the \emph{smearing condition}, which was introduced by Elias, Lauter, Ozman, and Stange in \cite{RLWE for NT}. We demonstrate the parallels between the smearing condition and the Coupon Collector's Problem and develop recursive methods for computing the probability of smearing. We also present a new attack on the PLWE decision problem using the smearing condition.

The paper is organized as follows: Section 2 summarizes relevant background related to the RLWE problems, Section 3 focuses on the smearing condition and gives an overview of related work, Section 4 provides methods of calculating smearing probabilities for both uniform and non-uniform distributions, and in Section 5 we provide a smearing-based attack on the PLWE problem. 

\section{Preliminaries} 
\subsection{Lattices and Gaussians} A \emph{lattice} is a discrete additive subgroup of a vector space $\mathbb{V}$. If $\mathbb{V}$ has dimension $n$ a lattice $L$ can be viewed as the set of all integer linear combinations of a set of linearly independent vectors $B = \{b_1, \cdots, b_k\}$ for some $k\leq n$, written $\mathcal{L} = L(B) = \sum_{i=1}^k: z_i\boldsymbol{b}_i, z_i\in \mathbb{Z}\}$. If $k = n$ we call the lattice full-rank, and we will only consider lattices of full-rank. We can extend this notion of lattices to matrix spaces by stacking the columns of a matrix. 
We recall the following standard definitions of lattices and Gaussians.

\begin{definition} Given a lattice $\mathcal{L}$ in a space $V$ endowed with a metric $\| \cdot \|$, the minimum distance of $\mathcal{L}$ is defined as $\lambda_1(\Lambda) = min_{0\neq \boldsymbol{v}\in\mathcal{L}}\Vert v \|$. Similarly, $\lambda_n$ is the minimum length of a set of $n$ linearly independent vectors, where the length of a set of vectors $\{\boldsymbol{x}_1, \cdots, \boldsymbol{x}_k\}$ is defined as $max_i \Vert \boldsymbol{x}_i \|$.
\end{definition}

\begin{definition}
Given a lattice $\mathcal{L} \subseteq \mathbb{V}$, where $V$ is endowed with an inner product $\langle\cdot,\cdot\rangle$, the dual lattice $\mathcal{L}^*$ is defined $\mathcal{L}^* =\{ \boldsymbol{v}\in \mathbb{V}: \langle \Lambda,\boldsymbol{v} \rangle \subset\mathbb{Z}\}$.
\end {definition}

For a vector space $\mathbb{V}$ with norm $\|\cdot\|$ and $σ > 0$,
we define the Gaussian function $\rho_{σ}: ℝ^n\rightarrow(0,1]$ by \[\rho_{σ}(x) = e^{-\frac{π\norm{x}^2}{2σ^2}}.\]



The \textit{Gaussian distribution} (normal distribution) with parameter $σ >0$ has a continuous probability density function
\[f(x) = \frac{1}{σ \sqrt{2π}}e^{-\frac{π\norm{x}^2}{2σ^2}}.\]

When sampling a Gaussian over a lattice $\mathcal{L}$ we will use the discrete form of the Gaussian distribution. This Gaussian distribution is discretized as follows.

\begin{definition}
A \textit{discrete Gaussian distribution} $\mathcal{DG}_σ$ with parameter $σ >0$ over a lattice $\mathcal{L}\subset ℝ^n$ is a distribution in which each lattice point $λ$ is sampled with probability $P_{λ}$ proportional to $\rho_σ(λ)$.
\[P_{λ}≔ \frac{\rho_σ(λ)}{\rho_σ(\mathcal{L})},\rho_σ(\mathcal{L}) = ∑_{λ\in\mathcal{L}}ρ_r(λ)\]
\end{definition}

It is well-known that the sum of $n$ independent, normally-distributed random variables is normal. 

\begin{definition}[Informal]
A \textit{spherical Gaussian} distribution is a multivariate Gaussian distribution such that there are no interactions between the dimensions.
\end{definition}
This implies that we can simply select each coordinate from a Gaussian distribution. We use the Gaussian distribution as the error distribution in the Learning with Errors problem, discussed below.

\subsection{Learning with Errors Distributions}

Let $f(x)$ be a monic irreducible polynomial in $\mathbb{Z}[x]$ of degree $n$. We use the notation $\mathbb{R}_q$ to denote the polynomial ring $\mathbb{Z}_q[x]/(f(x))$. 


An instance of the RLWE distribution is given by a choice of number field $K$, secret $s$, prime $q$, and parameter $σ$ (for the error distribution). 

\begin{definition}[RLWE Distribution, \cite{RLWE for NT}]
Let $R=\mathcal{O}_K$ be the ring of integers for number field $K$. Define \[R_q=R/qR.\] Let $\mathcal{U}_{R_q}$ be the uniform distribution over $R_q$. Let the error distribution be $\mathcal{G}_{σ, R_q}$, a discrete Gaussian distribution over $R_q$. 
For some $s\in R_q$, $a\leftarrow \mathcal{U_{P_q}}$, $e\leftarrow \mathcal{G}_{σ, R_q}$, 
pairs of the form
$(a, c=a⋅ s + e)$
compose the \textit{RLWE distribution} $\mathcal{D}_{R, s, G_{σ}}$ over $R_q\times R_q$.
\end{definition}

The PLWE distribution is defined similarly; rather than the ring of integers of a number field, the distribution is defined over a polynomial ring. An instance of the PLWE distribution is now given by a choice of monic, irreducible polynomial $f(x)\in ℤ[x]$, secret $s$, prime $q$, and parameter $σ$ (for the error distribution). 

\begin{definition}[PLWE Distribution, \cite{RLWE for NT}]
Let $f(x)\in ℤ[x]$ be monic, irreducible of degree $n$. Assume that $f(x)$ splits over $ℤ_q≔ ℤ/qℤ$. Define
\[P\coloneqq ℤ[x]/(f(x)), P_q\coloneqq P/qP.\]

Let $G_{σ, P}$ be a discrete Gaussian distribution over $P$ spherical in the power basis of $P$ $(1,x,x^2,\ldots, x^{n-1})$. Let $\mathcal{U}_{P_q}$ be the uniform distribution over $P_q$. Let the error distribution be $\mathcal{G}_{σ, P_q}$ a discrete Gaussian distribution over $P_q$. 

For some $s\in P_q$, $a\leftarrow \mathcal{U}_{P_q}$, and $e\leftarrow \mathcal{G}_{σ, P_q}$, pairs of the form
\[(a, c=a⋅ s + e)\]
compose the \textit{PLWE distribution} $\mathcal{D}_{P, s, G_{σ}}$ over $P_q\times P_q$.
\end{definition}

The decision problems for RLWE and PLWE are analogous to Decision LWE: given the same number of (arbitrarily many) independent samples from two distributions, determine with non-negligible advantage that the set of samples follows the RLWE (PLWE) distribution $\mathcal{D}_{R,s,G_{σ}}$ ($\mathcal{D}_{P,s,G_{σ}}$) versus a uniform distribution over $R_q\times R_q$ ($P_q\times P_q$).

\section{Smearing Condition}
\subsection{Motivation and Related Work}
A common technique for breaking cryptographic schemes is to transfer the problem onto a smaller space, where looking for the secret key by brute force is feasible. In finding the secret $s\in P_q$ in a PLWE problem by brute force, the attacker would have to go through $q^n$ different possibilities, which is infeasible due to the sizes of $q$ and $n$. However, if the attacker can somehow transfer the PLWE problem onto a smaller field, like $\Z_q$, then brute force suddenly becomes feasible, and, if not much information is lost in this transformation, then a brute search on $\Z_q$ would help solve the original problem on $P_q$.

An example of this approach is the ``$\gamma=1$ attack" on Decision-PLWE, as presented in \cite{RLWE for NT}. Suppose that $f$ has a root at $\gamma=1$, i.e. $f(1)≡ 0 \mod q$. Expressing $e$ in the power basis, $e(x)=∑_{j=0}^{n-1} e_j x^j$, where $e_j \sim \mathcal{G}_{\sqrt{n}σ}$. Then, $e(1)=∑_{j=0}^{n-1} e_j \leftarrow \mathcal{G}_{\sqrt{n}σ}$. So, if samples follow the PLWE distribution, $e(1)$ can only take on a \emph{small range of values}.


Note that there are $q$ possibilities for the value of $s(1)$. So,  
\begin{enumerate}
    \item For all possible guesses $g\in \Z_q$ (where $g$ is a guess for $s(1)$ and for each sample $(a_i, c_i)$, compute $e_i'=c_i(1)-g ⋅ a_i(1)$.
    \begin{itemize}
    \item [(a)] Check and record if $e_i'$ is within $\mathcal{G}_{\sqrt{n}σ}$. \\ 
    Note: If the guess for $s(1)$ is correct, $e_i'$ will equal $e_i(1)$. If the guess for $s(1)$ is incorrect, or if $(a_i, c_i)$ are uniform to begin with, $e_i'$ will be uniform over $\Z_q$.
    \end{itemize}
    \item Make a decision about the sample distribution: 
    \begin{enumerate}
        \item If there is one $g$ for which all the $c_i(1)-g ⋅ a_i(1)$'s are within $\mathcal{G}_{\sqrt{n}σ}$, the $(a_i, c_i)$ are taken from the \textit{PLWE distribution} with $g=s(1)$. 
        \item If all possible $g$ values give uniform distributions of $c_i(1)-g ⋅ a_i(1)$, the $(a_i, c_i)$ are taken from the \textit{uniform distribution}. 
        \item If several $g$ appear to work, repeat the algorithm with more samples. 
    \end{enumerate}
\end{enumerate}

This attack works with probability $>\frac{1}{2}$ \cite{RLWE for NT}.
Similar attacks exist for $\gamma$ of small order, where $\gamma^r≡ 1$ for some small $r\in ℤ^+$.

Other attacks include exploitation of the size of the error values. However, the probability of success for this particular attack decays (except for in a special case) and is unlikely to be implemented \cite{provably weak}.


\begin{definition}[\cite{RLWE for NT}]
Let $f(x)\in ℤ[x]$ be a monic, irreducible polynomial of degree $n$ and let $\gamma$ be a root of $f(x)$. Then a \emph{smearing map} $\pi_{\gamma}$ is defined as
\begin{align*}
π_{\gamma}:P_q&\rightarrow \mathbb{Z}_q\\
    g(x)&\mapsto g(\gamma)
\end{align*}
\end{definition}

\begin{definition}[\cite{RLWE for NT}]
Given a smearing map $π_{\gamma}$ and a subset $S\subset P_q$, we say that $S$ \emph{smears} under $π_{\gamma}$ if $π_{\gamma}(\mathcal{S})=\mathbb{Z}_q$.
\end{definition}

Note that for a smearing map $π_{\gamma}$ we have $\ker \pi_{\gamma}=(x-\gamma)$.
Also, note that $\frac{P_q}{(x-\gamma)} \cong \Z_q$ where $P_q$ is a polynomial ring and $\gamma$ is one of its roots. This implies that $(x-\gamma)$ has $q$ cosets in $P_q$, which are, consequently, $\{i+(x-\gamma)\}_{i=0}^{q-1}$.

\begin{lemma} \label{images} 
Let $π_{\gamma}$ be a smearing map. Then, $π_{\gamma}(f)=π_{\gamma}(g)$ if and only if $f$ and $g$ are in the same coset of $(x-\gamma)$.
\end{lemma}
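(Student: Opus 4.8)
The plan is to treat this as the standard characterization of when a homomorphism identifies two elements, namely that $\pi_\gamma(f) = \pi_\gamma(g)$ exactly when $f$ and $g$ differ by an element of $\ker \pi_\gamma$. Since the excerpt has already recorded that $\ker \pi_\gamma = (x-\gamma)$, the entire statement will follow once I make the homomorphism bookkeeping explicit.

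First I would verify that $\pi_\gamma$ is additive. The map $\pi_\gamma$ is evaluation at $\gamma$, and evaluation at a fixed point is a ring homomorphism $\mathbb{Z}_q[x] \to \mathbb{Z}_q$; it descends to a well-defined map on $P_q = \mathbb{Z}_q[x]/(f(x))$ precisely because $f(\gamma) \equiv 0 \bmod q$, which holds by hypothesis since $\gamma$ is a root of $f$. In particular $\pi_\gamma(f - g) = \pi_\gamma(f) - \pi_\gamma(g)$ for all $f, g \in P_q$.

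Next, I would chain the equivalences. The equality $\pi_\gamma(f) = \pi_\gamma(g)$ holds if and only if $\pi_\gamma(f) - \pi_\gamma(g) = 0$, which by additivity is equivalent to $\pi_\gamma(f - g) = 0$, i.e.\ to $f - g \in \ker \pi_\gamma$. Invoking $\ker \pi_\gamma = (x-\gamma)$ turns this into $f - g \in (x-\gamma)$, and by the definition of cosets of an ideal this says exactly that $f$ and $g$ lie in the same coset of $(x-\gamma)$. Reading the chain forwards gives one implication and backwards gives the converse, so the biconditional is established.

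I do not expect a genuine obstacle here: the argument is the kernel--coset dictionary applied to one concrete homomorphism. The only point that deserves a sentence of care is the well-definedness of $\pi_\gamma$ on the quotient $P_q$ (and hence the correctness of the asserted kernel), but this is immediate from $f(\gamma) = 0$ and has effectively been granted by the remark preceding the lemma. Thus the proof is a short formal verification rather than a computation.
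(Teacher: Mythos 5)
Your proof is correct and follows essentially the same route as the paper: the paper's chain $f(\gamma)=g(\gamma)\iff (f-g)(\gamma)=0\iff f-g=h\cdot(x-\gamma)\iff f,g$ in the same coset is exactly your kernel--coset argument with the step $\ker\pi_\gamma=(x-\gamma)$ written out via the factor theorem rather than cited. The only difference is expository: you make the well-definedness of $\pi_\gamma$ on the quotient explicit, which the paper leaves implicit.
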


\begin{proof} The claim follows from the fact that
\begin{align*}
    f(\gamma)=g(\gamma)&\iff (f-g)(\gamma)=0\\
    &\iff f-g=h⋅ (x-\gamma) \text{ for some } h\in P_q\\
    &\iff f,g \text{ are in the same coset of }(x-\gamma).
\end{align*}
\end{proof}

This lemma implies that the set $S$ smears if and only if $S$ contains an element in each of the $q$ cosets of $(x-\gamma)$ in $P_q$.
In the next two sections we investigate the \emph{size} of a subset sampled from a \emph{uniform distribution} and investigate the properties of a subset sampled from a \emph{Gaussian distribution} as in the PLWE problem. 

\subsection{Smearing: The Uniform Distribution Case}
We investigate the \emph{size} of a subset sampled from a \emph{uniform distribution} i.e the polynomials in $S$ are chosen \emph{uniformly random} over $P_q$. 
As we will see, the assumption of uniformity eliminates much of the algebraic aspects of the smearing problem as related to PLWE and reduces the problem to a classic problem in probability theory, \emph {the Coupon Collector's Problem}. 

The classical version of the Coupon Collector’s Problem is as follows: Suppose a company places one of $q$ distinct types of coupons, $c_1, \ldots, c_p$, into each of its cereal boxes independently, with equal probability $\frac{1}{q}$. Let $X$ be a random variable indicating the number of objects one has to buy before collecting all of the coupons. The question then is:
\begin{center}
 How many boxes should one expect to buy before collecting at least one of each type of coupon? Equivalently, what is $\mathbb{E}[X]$?
\end{center}

The following is a well-known lemma which computes $\mathbb{E}[X]$ with a geometric distribution approach.
\begin{lemma}[\cite{ferrante}]
Let $X$ be the number of boxes needed to be purchased to collect all $q$ coupons. Then,
\[\mathbb{E}[X]=qH_q=q\log q+γ q + \frac{1}{2}+O(1/q)\]
where $H_q$ is the $q$-th harmonic number and $γ \approx 0.57722$ is the Euler-Mascheroni constant. 
Furthermore,
$$\text{Var}(X)<\frac{π^2}{6}q^2.$$
\end{lemma}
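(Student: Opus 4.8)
The plan is to decompose the waiting time $X$ into a sum of independent geometric random variables, one for each "new coupon acquired" milestone, and then apply linearity of expectation and additivity of variance. First I would define, for $1 \le i \le q$, the random variable $X_i$ to be the number of additional boxes purchased to obtain the $i$-th \emph{distinct} coupon type, given that $i-1$ distinct types have already been collected. By construction $X = X_1 + X_2 + \cdots + X_q$. The crucial observation is that once $i-1$ distinct coupons are in hand, each subsequent box independently yields a new type with probability
\[
p_i = \frac{q-(i-1)}{q} = \frac{q-i+1}{q},
\]
so $X_i$ is geometrically distributed with parameter $p_i$. Because each box is drawn independently and the count $X_i$ depends only on the size of the current collection (not on the history of how it was assembled), the variables $X_1, \ldots, X_q$ are mutually independent.

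For the expectation, I would use $\mathbb{E}[X_i] = 1/p_i = q/(q-i+1)$ and sum via linearity:
\[
\mathbb{E}[X] = \sum_{i=1}^{q} \frac{q}{q-i+1} = q \sum_{k=1}^{q} \frac{1}{k} = q H_q.
\]
To obtain the stated asymptotic form, I would invoke the standard Euler--Maclaurin expansion $H_q = \log q + \gamma + \frac{1}{2q} + O(1/q^2)$, which upon multiplying by $q$ gives exactly $q\log q + \gamma q + \frac{1}{2} + O(1/q)$.

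For the variance, independence lets me write $\Var(X) = \sum_{i=1}^{q} \Var(X_i)$. Using the geometric-distribution formula $\Var(X_i) = (1-p_i)/p_i^2 < 1/p_i^2$, I would bound
\[
\Var(X) < \sum_{i=1}^{q} \frac{1}{p_i^2} = \sum_{i=1}^{q} \frac{q^2}{(q-i+1)^2} = q^2 \sum_{k=1}^{q} \frac{1}{k^2} < q^2 \sum_{k=1}^{\infty} \frac{1}{k^2} = \frac{\pi^2}{6} q^2,
\]
where the final equality is the Basel identity $\sum_{k \ge 1} k^{-2} = \pi^2/6$.

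The main obstacle is not any single computation but the justification of independence of the $X_i$: one must argue carefully that the memoryless structure of the i.i.d.\ box draws makes each waiting time depend only on the \emph{number} of coupons already collected, so that the increments genuinely decouple. Once independence is granted, the expectation is immediate and the variance bound follows from the strict inequality $1-p_i < 1$ together with extending the finite sum to the convergent Basel series; the asymptotic refinement of $qH_q$ is then a routine appeal to the known expansion of the harmonic numbers.
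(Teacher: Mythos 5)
Your proposal is correct and is exactly the standard geometric-decomposition argument that the paper alludes to (it states the lemma as well known, citing Ferrante--Saltalamacchia and describing it as ``a geometric distribution approach'' without reproducing the proof). The decomposition $X=\sum_i X_i$ with $X_i\sim\mathrm{Geom}\bigl((q-i+1)/q\bigr)$, linearity of expectation, the harmonic-number expansion, and the variance bound via $\sum_k k^{-2}<\pi^2/6$ all match the intended argument, so nothing further is needed.
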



We can reduce the problem of uniform smearing to the Coupon Collector's Problem.
\begin{lemma}
A uniform distribution over $P_q$ maps under $\pi_{\gamma}$ to a uniform distribution over $\Z_q$.
\end{lemma}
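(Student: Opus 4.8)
The plan is to show that the smearing map $\pi_{\gamma}$ induces a bijection between the $q$ cosets of $(x-\gamma)$ in $P_q$ and the elements of $\mathbb{Z}_q$, and that this bijection sends the uniform distribution to the uniform distribution because each coset has exactly the same number of elements. The key structural fact is already available: by Lemma~\ref{images}, $\pi_{\gamma}(f)=\pi_{\gamma}(g)$ if and only if $f$ and $g$ lie in the same coset of $(x-\gamma)$, so $\pi_{\gamma}$ descends to a well-defined injection on cosets. Combined with the surjectivity of the quotient map (every element of $\mathbb{Z}_q$ is hit, since $\pi_{\gamma}(c)=c$ for the constant polynomial $c$), this gives a bijection $P_q/(x-\gamma)\xrightarrow{\sim}\mathbb{Z}_q$, which is exactly the ring isomorphism $P_q/(x-\gamma)\cong\mathbb{Z}_q$ noted in the excerpt.

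The heart of the argument is a counting step. First I would observe that $\pi_{\gamma}$ is additive (indeed a ring homomorphism), so its kernel is the ideal $(x-\gamma)$, which is a subgroup of the additive group $(P_q,+)$. The fibers of $\pi_{\gamma}$ are precisely the cosets of this kernel, and a standard fact about group homomorphisms is that all cosets of a fixed subgroup have equal cardinality. Concretely, for any fixed target value $v\in\mathbb{Z}_q$, the preimage $\pi_{\gamma}^{-1}(v)$ is a coset of $(x-\gamma)$, and all $q$ such cosets partition $P_q$ into blocks of equal size $|P_q|/q = q^{n-1}$.

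From this I would conclude the distributional statement directly. If $g$ is drawn uniformly from $P_q$, then for each $v\in\mathbb{Z}_q$ we have
\[
\mathbb{P}\big[\pi_{\gamma}(g)=v\big]=\frac{|\pi_{\gamma}^{-1}(v)|}{|P_q|}=\frac{q^{n-1}}{q^{n}}=\frac{1}{q},
\]
which is independent of $v$. Hence the pushforward of the uniform distribution on $P_q$ under $\pi_{\gamma}$ is the uniform distribution on $\mathbb{Z}_q$, as claimed.

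I do not expect a serious obstacle here; the statement is essentially the observation that a surjective group homomorphism with equal-sized fibers pushes uniform to uniform. The only point requiring care is justifying that all fibers have the same size, and the cleanest route is to invoke the coset structure from Lemma~\ref{images} together with Lagrange's theorem on the additive group, rather than attempting an explicit bijection between individual fibers. An alternative phrasing would appeal to the ring isomorphism $P_q/(x-\gamma)\cong\mathbb{Z}_q$ and note that the natural quotient map always transports the uniform (Haar) measure on a finite group to the uniform measure on the quotient; either framing closes the proof.
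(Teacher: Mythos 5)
Your proof is correct and follows essentially the same route as the paper's: both invoke Lemma~\ref{images} to identify the fibers of $\pi_{\gamma}$ with the cosets of $(x-\gamma)$ and then use the fact that these $q$ cosets partition $P_q$ into equal-sized blocks, so each value in $\mathbb{Z}_q$ is hit with probability $1/q$. Your write-up merely adds explicit detail (the fiber size $q^{n-1}$, surjectivity via constants) that the paper leaves implicit.
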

\begin{proof}
By 
Lemma \ref{images}, and the fact that all cosets of $(x-\gamma)$ are of the same size, a polynomial chosen uniformly at random in $P_q$ will have probability $1/q$ of being in any given coset of $(x-\gamma)$ and hence there is a $1/q$ probability that $\pi_{\gamma}$ produces any given element in $\Z_q$. 

\end{proof}

So, instead of selecting polynomials in $P_q$ we can choose elements of $\Z_q$ uniformly. In this context, the smearing problem is identical to the Coupon Collector's Problem. Each polynomial has an image uniformly chosen between $1$ and $q$, and we want to ``collect all the coupons," i.e. for each element $j\in \Z_q$, collect at least one polynomial having $j=\pi_{\gamma}$ as its image under the smearing map.
\subsubsection{The Principal Question}
Let $m=|S|$ and $\Z_q$ be given. Assume that the elements of $S$ are chosen uniformly at random ($s\in S$, $s\leftarrow \mathcal{U}_{P_q}$). The question is to determine the probability that $S\subset P_q$ will smear. 
\begin{remark}
Note that, although in the smearing problem $q$ must be prime, in the broader context of the coupon collector's problem $q$ can be any positive integer; we will not demand such a restriction within our probabilistic calculations.
\end{remark}

Fix a polynomial $f(x)\in\mathbb{Z}[x]$ and thus fix some root $\gamma\in \mathbb{Z}_q$ and the smearing map $\pi_{\gamma}$. Denote with $P(m,q)$ the probability that a subset $S\subset P_q$ of size $m$ smears.

\begin{remark}
Given a probability distribution on $X$ (the random variable from the coupon collector's problem representing the number of cereal boxes one must purchase before collecting all $q$ coupons), $P(m,q)$ is simply the cumulative distribution function of $X$, since $P(S \text{ smears})=P(X\le |S|=m).$
\end{remark}

We compute and approximate smearing probabilities in Section 4.
\subsection{Smearing: The Non-Uniform Case}
In this section, we investigate the smearing condition when the error distribution over $P_q$ is not uniform. Note first that we can view drawing from $\mathcal{D}_{P, s,\sigma}$
\[(a, a\cdot s +e), a\leftarrow \mathcal{U}_{P_q}, s\in P_q, e\leftarrow G_{\sigma, P_q}\] as simply drawing 
\[(a,e),  a\leftarrow \mathcal{U}_{P_q},e\leftarrow G_{\sigma, P_q}\] since multiplying an element $a$ selected uniformly at random by a fixed secret $s$ is the same as selecting $a\cdot s$ uniformly at random, which, when the Gaussian distribution is added, yields the same Gaussian distribution for $e$. When we discuss the mapped error distribution, we consider selecting $e\leftarrow G_{\sigma, P_q}$ and its mapping $\pi_{\gamma}(e)$.

\subsubsection{The Distribution of $e(α)$}
An explicit method of calculating the probability distribution of $e(α)$ given the distribution of the polynomial coefficients of $e$ is presented in \cite{error analysis}.

\begin{theorem}\cite{error analysis}
Suppose $e_0, e_1, ..., e_{n-1}$ are independent random variables in $\Z_q$ with the same probability distribution $(c_0, c_1, ..., c_{q-1})$. Let $c(x)=∑_{i\in\Z_q}c_ix^i$. Then, for any $a_1, ..., a_{n-1}\in\Z_q$, the probability distribution of $e_0+e_1a_1+⋯+e_{n-1}a_{n-1}\mod q$ can be computed as the coefficients of the polynomial
$$c(x)c(x^{a_1})⋯ c(x^{a_{n-1}}) \mod x^q-1.$$
\end{theorem}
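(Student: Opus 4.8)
The plan is to interpret each probability distribution on $\Z_q$ as an element of the group algebra $\mathbb{R}[x]/(x^q-1)$ and to show that the two operations appearing in the sum $e_0 + e_1 a_1 + \cdots + e_{n-1} a_{n-1}$—adding independent variables modulo $q$, and scaling a variable by a constant $a_i$—correspond respectively to multiplying and to substituting $x \mapsto x^{a_i}$ in this algebra. To a random variable $Y$ on $\Z_q$ I associate its generating polynomial $p_Y(x) = \sum_{k \in \Z_q} \Pr[Y = k]\, x^k$, so that the distribution of $Y$ is exactly the coefficient vector of $p_Y$; the hypothesis then reads $p_{e_i}(x) = c(x)$ for every $i$.

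The first step is a convolution lemma: if $Y$ and $Z$ are independent random variables on $\Z_q$, then $p_{Y+Z}(x) \equiv p_Y(x)\, p_Z(x) \pmod{x^q-1}$. This follows by extracting the coefficient of $x^k$ on the right: since $x^q \equiv 1$, the product groups together all pairs of exponents $(j,\ell)$ with $j + \ell \equiv k$, giving $\sum_{j} \Pr[Y=j]\Pr[Z \equiv k - j \pmod q]$, which by independence is exactly $\Pr[Y + Z \equiv k \pmod q]$. Thus multiplication modulo $x^q - 1$ implements cyclic convolution, i.e.\ addition in $\Z_q$.

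The second step is a scaling lemma: for $a \in \Z_q$, the variable $aY \bmod q$ has generating polynomial $p_Y(x^a) \bmod (x^q-1)$. Indeed, substituting $x \mapsto x^a$ sends the term $c_j x^j$ to $c_j x^{aj}$, and reduction modulo $x^q-1$ replaces the exponent $aj$ by $aj \bmod q$; hence the coefficient of $x^k$ becomes $\sum_{j:\, aj \equiv k} c_j = \Pr[aY \equiv k \pmod q]$, as required. Applying this with $Y = e_i$ shows that each summand $e_i a_i$ has generating polynomial $c(x^{a_i})$, the bare $e_0$ term corresponding to $a_0 = 1$ and hence to $c(x^1) = c(x)$.

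Combining the two lemmas completes the argument: since $e_0, e_1, \ldots, e_{n-1}$ are independent, so are the scaled variables $e_i a_i$, and an induction on $n$ using the convolution lemma shows that the generating polynomial of $e_0 + e_1 a_1 + \cdots + e_{n-1} a_{n-1}$ is the product $c(x)\, c(x^{a_1}) \cdots c(x^{a_{n-1}})$ taken modulo $x^q - 1$. Reading off its coefficient vector yields the claimed distribution. I expect the only delicate point to be the scaling lemma when $\gcd(a_i, q) > 1$, where $j \mapsto a_i j$ fails to be a bijection of $\Z_q$ and several coefficients $c_j$ collapse onto a single exponent; but the substitute-and-reduce operation accounts for this merging automatically, so no separate case analysis is needed (and in the smearing setting $q$ is prime, so this only arises for $a_i = 0$).
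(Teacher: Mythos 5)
Your proof is correct. Note that the paper does not actually prove this statement---it is imported verbatim from Chen, Case, Gao et al.\ \cite{error analysis} and stated without proof---so there is no in-paper argument to compare against; your argument (identify a distribution on $\Z_q$ with its generating polynomial in $\mathbb{R}[x]/(x^q-1)$, observe that multiplication modulo $x^q-1$ is cyclic convolution and hence handles sums of independent variables, that substitution $x\mapsto x^{a_i}$ followed by reduction handles scaling by $a_i$, and induct on $n$) is the standard proof of this fact. Your remark that the substitute-and-reduce step automatically accounts for the non-injective case $\gcd(a_i,q)>1$ (including $a_i=0$, where $c(x^0)=c(1)=1$ is the point mass at $0$) is also correct and closes the only potentially delicate point.
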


Since $e(α)=e_0+e_1α+e_2α^2+⋯+e_{n-1}α^{n-1}$,

by setting $a_i=α^i$, $1≤ i ≤ n-1$ and using the theorem above we can compute the probability distribution of $e(α)$ over $\Z_q$. In general, we refer to that distribution as $χ$, the ``mapped error distribution''.

We can compute the discrete Gaussian distribution over $ℤ_q = ℤ/qℤ$ as in \cite{error analysis}. The parameter used is $\sigma =\frac{\beta}{\sqrt{2\pi}}*q$.
\begin{figure}
    \centering
    \includegraphics[width=.75\textwidth]{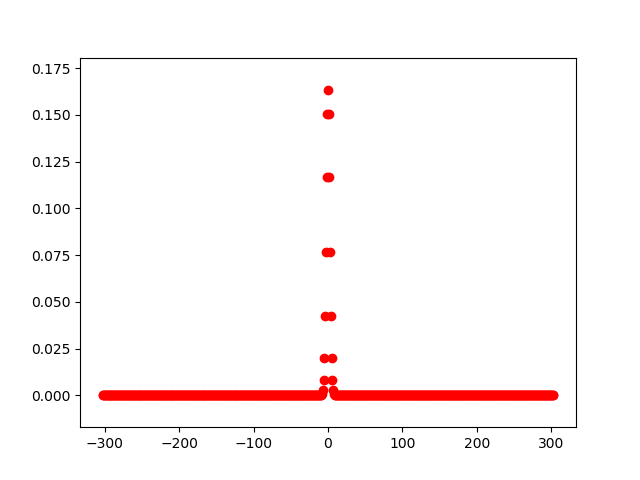}
    \caption{Initial Gaussian Distribution over $\mathbb{Z}_{607}$ with $\beta=.01$}
    \label{fig: initial}
\end{figure}
\begin{figure}
    \centering
    \includegraphics[width=.75\textwidth]{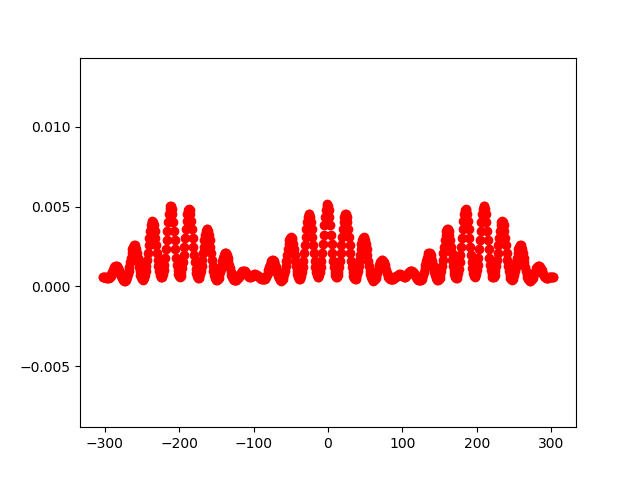}
    \caption{Mapped Error Distribution given Figure \ref{fig: initial} over $\mathbb{Z}_{607}$ with $\gamma = 396$, $\gamma^3\equiv 1 \mod 607$}
\end{figure}
Note that the low multiplicative order of $\gamma =396$ over $\mathbb{Z}_{607}$ gives the mapped error distribution structure; this illustrates the setting for the $\gamma$ of low order attack, as the mapped error distribution is certainly not uniform. 

\section{Computing Smearing Probabilities}

Let $\chi$ be a discrete probability distribution on $[q]$ (where $[q]$ denotes the set $\{1, 2,\ldots q\}$), and let $P_{\chi}(m,q) $ denote the probability that, when $m$ samples are independently drawn from $\chi$, they will ``smear,'' i.e. each element in $[q]$ will be chosen at least once. When $\chi$ is the uniform distribution, we denote this probability as $P_U(m,q)$, or simply as $P(m,q)$. In this section, we provide practical ways of calculating these probabilities.

\subsection{An Approximation of Uniform Smearing Probabilities}

A result by Erdős and Rényi \cite{erdos} gives a way to approximate $P(m,q)$ for large values of $q$.

\begin{theorem}[Erdős, Rényi \cite{erdos}]\label{erdos approx}

Let $U$ be the uniform distribution over $[q]$, and let $X$ be the random variable denoting the number of independent samples one must take from $U$ until picking each of $[q]$ at least once. Then,

\[\lim_{q\rightarrow\infty}\Pr(X<q\log q+cq)=\exp(-\exp(-c)).\]

\end{theorem}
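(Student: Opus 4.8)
The plan is to prove the Erdős–Rényi limit law for the coupon collector waiting time $X$, where $X$ is the number of i.i.d. uniform samples from $[q]$ needed until every coupon has appeared. Setting $m = q\log q + cq$, I want to show $\Pr(X < m) \to \exp(-\exp(-c))$ as $q \to \infty$. The natural strategy is inclusion–exclusion on the "bad" events $A_i = \{\text{coupon } i \text{ never appears in the first } m \text{ draws}\}$, since $\Pr(X \ge m) = \Pr\bigl(\bigcup_{i=1}^q A_i\bigr)$, and then to take a careful limit of the resulting alternating sum.

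First I would write, by inclusion–exclusion,
\[
\Pr(X \ge m) = \sum_{k=1}^{q} (-1)^{k-1}\binom{q}{k}\Bigl(1 - \tfrac{k}{q}\Bigr)^{m},
\]
using the fact that the probability that a fixed set of $k$ coupons is all missing after $m$ draws is $(1-k/q)^m$ (each draw avoids those $k$ coupons with probability $1-k/q$, independently). Equivalently, $\Pr(X < m) = \sum_{k=0}^q (-1)^k \binom{q}{k}(1-k/q)^m$. Next I would substitute $m = q\log q + cq$ and analyze the $k$-th term. The key asymptotic is that
\[
\binom{q}{k}\Bigl(1-\tfrac{k}{q}\Bigr)^{m} \;\approx\; \frac{q^k}{k!}\, e^{-km/q} \;=\; \frac{q^k}{k!}\, e^{-k(\log q + c)} \;=\; \frac{1}{k!}\, e^{-ck},
\]
since $q^k e^{-k\log q} = 1$ and $(1-k/q)^m \approx e^{-km/q}$ for $k$ fixed and $q$ large. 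Summing the alternating series over $k$ then gives $\sum_{k=0}^{\infty} \frac{(-1)^k}{k!}e^{-ck} = \exp(-e^{-c})$, which is exactly the claimed limit.

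The main obstacle is making the interchange of limit and sum rigorous: the inclusion–exclusion sum has $q+1$ terms, and the number of terms grows with $q$, so one cannot simply pass to the limit term-by-term without justification. The cleanest route is to establish a \emph{uniform} bound controlling the tail of the sum — for instance, showing that the terms with $k$ large contribute negligibly, using the monotonicity of the partial sums of an alternating-type series (Bonferroni inequalities), so that truncating at any fixed $K$ sandwiches $\Pr(X<m)$ between two finite sums whose limits both approach $\exp(-e^{-c})$ as $K \to \infty$. I would also need to control the approximation error in replacing $\binom{q}{k}(1-k/q)^m$ by $\frac{1}{k!}e^{-ck}$; for fixed $k$ this follows from $\binom{q}{k} = \frac{q^k}{k!}(1+O(1/q))$ together with the expansion $\log(1-k/q)^m = -km/q - O(k^2 m/q^2)$, and one checks that the error term $k^2 m / q^2 = O(k^2 \log q / q) \to 0$ for $k$ in the relevant range.

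An alternative I would keep in reserve, in case the uniform tail control proves delicate, is a Poisson-approximation argument via the Chen–Stein method: the number $W$ of uncollected coupons after $m$ draws has expectation $\lambda = q(1-1/q)^m \to e^{-c}$, the events $A_i$ are negatively associated and each individually rare, so $W$ converges in distribution to a Poisson$(e^{-c})$ random variable, whence $\Pr(X < m) = \Pr(W = 0) \to e^{-e^{-c}}$. This packages the tail estimates into a standard total-variation bound and may be the more robust presentation, though the inclusion–exclusion computation is more self-contained and elementary, so I would lead with it.
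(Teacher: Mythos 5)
The paper does not actually prove this theorem: it is imported verbatim from Erd\H{o}s and R\'enyi \cite{erdos}, so there is no in-paper argument to compare against. Your sketch is essentially the classical proof (and close to the original): inclusion--exclusion over the events that individual coupons are missing after $m=q\log q+cq$ draws, the termwise limit $\binom{q}{k}\bigl(1-\tfrac{k}{q}\bigr)^{m}\to e^{-ck}/k!$, and Bonferroni truncation to justify exchanging the $q\to\infty$ limit with the alternating sum whose length grows with $q$. That plan is sound, and you correctly isolate the only genuinely delicate point (the interchange) together with a valid fix; the Poisson/Chen--Stein alternative you keep in reserve is likewise standard and correct. One immaterial slip: with $A_i$ defined as ``coupon $i$ absent from the first $m$ draws,'' the union $\bigcup_i A_i$ is the event $\{X>m\}$, not $\{X\ge m\}$; this off-by-one has no effect on the limit. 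To turn the sketch into a complete proof you would still need to write out the Bonferroni sandwich (truncations at even and odd $K$, then $K\to\infty$) and the $O(k^2m/q^2)$ error estimate explicitly, but nothing is missing at the level of ideas.
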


In our case, $P(m,q) = \Pr(X\leq m)$, so making the substitution $m=q\log q+cq$ gives the formula

\[P(m,q)\approx\exp\left(-q \exp\left(-\frac{m}{q}\right)\right).\]

Although this is a powerful approximation, for some applications it might be preferable to calculate this probability exactly for concrete values of $m$ and $q$. The following sections contribute towards this goal, as well as give formulas for the case when $\chi$ is not uniform.

\subsection{A Recursive Formula in $m$}\label{m-rec-section}

\begin{proposition}\label{m-rec}

Let $\chi$ be a discrete probability distribution on $[q]$, with $p_k$ being the probability of picking the $k$th element. Let $\chi/k$ denote the probability distribution on $q-1$ elements after the $k$th element has been removed from $\chi$, and the remaining probabilities have been normalized. Then,

\[P_{\chi}(m,q)=P_{\chi}(m-1,q)+\sum_{k=1}^qp_k(1-p_k)^{m-1}\cdot P_{\chi/k}(m-1,q-1).\]

\end{proposition}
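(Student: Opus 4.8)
The plan is to prove the identity by conditioning on which sample completes the collection. Throughout, I interpret $P_\chi(m,q)$ as the probability that $m$ independent draws $Y_1,\dots,Y_m$ from $\chi$ hit every element of $[q]$, and I write $A_m$ for this event. Since collecting all coupons within the first $m-1$ draws forces them to be collected within the first $m$ draws, we have $A_{m-1}\subseteq A_m$, so $P_\chi(m,q)=P_\chi(m-1,q)+\Pr(A_m\setminus A_{m-1})$. The whole task then reduces to showing that $\Pr(A_m\setminus A_{m-1})$ equals the sum appearing in the statement.

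The key observation is a structural description of the difference event: $A_m\setminus A_{m-1}$ is the event that the first $m$ draws smear while the first $m-1$ do not. Because one additional draw can cover at most one new element, this happens precisely when, after $m-1$ draws, exactly one element $k\in[q]$ is still missing and $Y_m=k$. I would decompose over the identity of this last element; since distinct choices of the missing $k$ give mutually exclusive events, this yields $\Pr(A_m\setminus A_{m-1})=\sum_{k=1}^q\Pr(B_k)$, where $B_k$ is the event that $Y_1,\dots,Y_{m-1}$ avoid the value $k$ yet cover all of $[q]\setminus\{k\}$, and $Y_m=k$.

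To evaluate $\Pr(B_k)$, I would first use independence of $Y_m$ from $Y_1,\dots,Y_{m-1}$ to split off the factor $\Pr(Y_m=k)=p_k$. For the remaining factor, condition on the event that each of $Y_1,\dots,Y_{m-1}$ avoids $k$, which has probability $(1-p_k)^{m-1}$; under this conditioning the draws are i.i.d.\ from the renormalized distribution $\chi/k$ on the $q-1$ surviving elements, so the conditional probability of covering all of $[q]\setminus\{k\}$ is exactly $P_{\chi/k}(m-1,q-1)$. Combining gives $\Pr(B_k)=p_k(1-p_k)^{m-1}P_{\chi/k}(m-1,q-1)$, and summing over $k$ delivers the claimed recursion.

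The main obstacle is the conditioning step: one must verify carefully that conditioning $m-1$ i.i.d.\ $\chi$-samples on the event that none of them equals $k$ produces $m-1$ i.i.d.\ samples from $\chi/k$, and that ``covering $[q]\setminus\{k\}$'' under this conditioning is literally the smearing event that defines $P_{\chi/k}(m-1,q-1)$. Once this is nailed down, the rest is just monotonicity and the disjoint decomposition over the last-collected coupon.
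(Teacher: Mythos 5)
Your proposal is correct and follows essentially the same route as the paper: decompose the smearing event into ``already smeared by draw $m-1$'' versus ``the $m$th draw completes the collection,'' condition on the value $k$ of the final draw, and compute that conditional probability as $(1-p_k)^{m-1}P_{\chi/k}(m-1,q-1)$. The conditioning step you flag as the main obstacle is exactly the step the paper also relies on (and it goes through because the avoidance event is a product event, so the conditioned samples remain i.i.d.\ from $\chi/k$).
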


\begin{proof}

Assume that we choose $m$ independent samples one-by-one from the distribution $\chi$. Let $S$ be the event that the samples smear. Let $A$ be the event that the $m^{th}$ sample achieves smearing (i.e. the previous $m-1$ samples cover $q-1$ distinct elements, and the $m$th sample happens to cover the remaining $q$th element). Also, let $B$ be the event that smearing happens within the first $m-1$ samples (i.e. by the time $m-1$ samples have been taken, they already take on $q$ distinct values). Notice that $S = A \sqcup B$. Therefore, 
\[P_{\chi}(m,q)=\Pr(S)=\Pr(A)+\Pr(B).\]
To calculate $P(A)$, we use the Law of Total Probability to condition on the outcome of the $m^{th}$  sample, which we denote by $K\in[q]$:

\begin{align*}
    \Pr(A)&=\sum_{k=1}^q\Pr(A|K=k)\cdot \Pr(K=k)\\
    &=\sum_{k=1}^q\Pr(A|K=k)\cdot p_k.
\end{align*}

To calculate the value of $\Pr(A|K=k)$, we notice that the only way that smearing is achieved by the $m$th sample being equal to $k$ is if, first, the previous $m-1$ samples all fall into $[q]/k$, and, second, if the previous $m-1$ samples smear on $[q]/k$. Therefore, 

\[\Pr(A|K=k)=(1-p_k)^{m-1}\cdot P_{\chi/k}(m-1,q-1),\]

where $(1-p_k)^{m-1}$ is the probability that the first $m-1$ samples are contained in $[q]/k$, and $P_{\chi/k}(m-1,q-1)$ is the probability that, conditioned on this, these $m-1$ samples smear on $[q]/k$. Hence,

\[\Pr(A)=\sum_{k=1}^qp_k(1-p_k)^{m-1}\cdot P_{\chi/k}(m-1,q-1).\]

On the other hand, the probability of event $B$, or that smearing is achieved within the first $m-1$ samples, is simply $P_{\chi}(m-1,q)$, so 

\[P_{\chi}(m,q)=P_{\chi}(m-1,q)+\sum_{k=1}^qp_k(1-p_k)^{m-1}\cdot P_{\chi/k}(m-1,q-1).\]

\end{proof}

In the case where $\chi$ is the uniform distribution on $[q]$, this relation becomes greatly simplified:

\begin{lemma}\label{un-rec}

For the uniform distribution on $[q]$,

\[P(m,q)=P(m-1,q)+P(m-1,q-1)\cdot\left(\frac{q-1}{q}\right)^{m-1}.\]

\end{lemma}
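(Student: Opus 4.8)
The plan is to specialize the general recursion of Proposition~\ref{m-rec} to the uniform distribution, where the symmetry collapses the sum to a single term. First I would set $\chi$ to be the uniform distribution on $[q]$, so that $p_k = 1/q$ for every $k \in [q]$. The factor $(1-p_k)^{m-1}$ then becomes $\left(\frac{q-1}{q}\right)^{m-1}$, independent of $k$.

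The key observation is that, for the uniform distribution, the conditioned-and-renormalized distribution $\chi/k$ is again uniform. Indeed, deleting one atom of mass $1/q$ and dividing the remaining masses by the normalizing constant $1 - 1/q = \frac{q-1}{q}$ assigns each of the surviving $q-1$ elements probability $\frac{1/q}{(q-1)/q} = \frac{1}{q-1}$. Hence $P_{\chi/k}(m-1,q-1) = P(m-1,q-1)$ for every $k$, so this factor, too, is constant in $k$.

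With these substitutions, each of the $q$ summands in Proposition~\ref{m-rec} is identical, equal to $\frac{1}{q}\left(\frac{q-1}{q}\right)^{m-1} P(m-1,q-1)$. Summing over $k = 1, \ldots, q$ cancels the leading factor of $1/q$, leaving $\left(\frac{q-1}{q}\right)^{m-1} P(m-1,q-1)$. Combined with the term $P_{\chi}(m-1,q) = P(m-1,q)$, this yields the claimed identity.

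Since the argument is essentially a direct substitution, there is no serious obstacle. The only point requiring care is confirming that removing a single element from the uniform distribution and renormalizing returns the uniform distribution on the smaller support; this is what makes every summand independent of $k$ and lets the sum collapse to a single clean term.
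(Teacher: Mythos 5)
Your proof is correct and follows exactly the paper's own argument: specialize Proposition~\ref{m-rec} to the uniform case, observe that $p_k = 1/q$ and that $\chi/k$ remains uniform on $[q-1]$, and collapse the sum of $q$ identical terms. No differences worth noting.
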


\begin{proof}

We use the result of Proposition \ref{m-rec}. In the uniform distribution, $p_k=\frac{1}{q}$ for every $k$. Furthermore, for every $k$, $\chi/k$ is the uniform distribution on $[q-1]$, so $P_{\chi/k}(m-1,q-1)=P(m-1,q-1)$. Therefore, if $\chi$ is the uniform distribution, then

\begin{align*}
    P_{\chi}(m,q)&=P_{\chi}(m-1,q)+\sum_{k=1}^qp_k(1-p_k)^{m-1}\cdot P_{\chi/k}(m-1,q-1)\\
    &=P(m-1,q)+\sum_{k=1}^q\frac{1}{q}\left(\frac{q-1}{q}\right)^{m-1}\cdot P(m-1,q-1)\\
    &=P(m-1,q)+P(m-1,q-1)\cdot\left(\frac{q-1}{q}\right)^{m-1}.
\end{align*}
\end{proof}


Lemma \ref{un-rec}, if implemented as a recursive formula, provides a very rapid method of computing $P(m,q)$. The base cases are rather straightforward. If $m<q$, then
\[P(m,q)=0,\]
since it is impossible to pick $q$ different elements with fewer than $q$ samples. If $m=q$, then 
\[P(m,q)=\frac{q!}{q^q}.\] 

To see why this is the case, notice that if the number of samples is equal to $q$, then every single sample must be a ``success,'' i.e. pick an element of $[q]$ that has not been picked before. For the first sample, one can pick any of $[q]$, so the probability of success is $q/q$. For the second sample, there are $q-1$ unpicked elements, so the probability of success is $(q-1)/q$. For the third sample, there are now $q-2$ elements that are not selected, so the probability of success is now $(q-2)/q$. This continues until the $q^{th}$ sample, for which there is only one option left, giving a success probability of $1/q$. Multiplying these probabilities together gives $q!/q^q$. Finally, if $q=1$, and $m>0$, then $P(m,q)=1$, as the first sample is always a success, and one success is sufficient in this case.

Computing $P(m,q)$ using the recursive formula of Lemma \ref{un-rec} along with these base cases results in the computation of $P(i,j)$ for each of $1\leq j\leq q$ and $j\leq i\leq j+(m-q)$. Hence, the complexity of the recursive computation is on the order of $q(m-q)$. Notice, however, that as a result, one computes not just $P(m,q)$, but also $P(i,q)$ for all $i\in[0,m]$, which is very useful information to have for choosing parameters for the smearing attack (which will be discussed later).

\begin{figure}
\includegraphics[width=.75\textwidth]{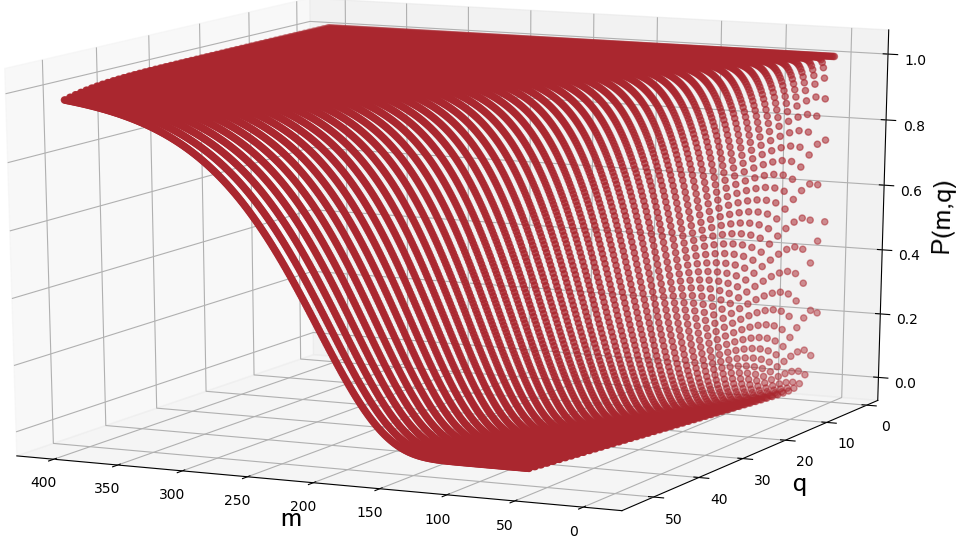}
\caption{$P(m,q)$, $1 \le m\le 400$ and $1\le q\le 53$.}
\label{un-graph}
\end{figure}

Figure \ref{un-graph} shows $P(m,q)$ values for a range of $m$ and $q$, calculated using this method.

While Lemma \ref{un-rec} provides an effective method of calculating smearing probabilities for uniform distributions, using Proposition \ref{m-rec} to calculate smearing probabilities for non-uniform distributions is inefficient, since, to calculate $P_{\chi}(m,q)$, one must calculate smearing probabilities on all subsets of $[q]$, which makes the complexity on the order of $2^q\cdot (m-q)$. A more efficient method for non-uniform smearing can be achieved by recursion in $q$, rather than recursion in $m$, as described in the next section.

\subsection{A Recursive Formula in $q$.}
\begin{proposition}\label{q-rec}
Let $\chi$ be a discrete probability distribution on $[q]$, with $p_q$ being the probability of picking the $q^{th}$ element. Let $\chi/q$ denote the probability distribution on $q-1$ elements after the $q^{th}$ element has been removed from $\chi$, and the remaining probabilities have been normalized. Then,
\[P_{\chi}(m,q)=\sum_{k=1}^{m-q+1}\binom{m}{k}p_q^k(1-p_q)^{m-k}\cdot P_{\chi/q}(m-k,q-1).\]
\end{proposition}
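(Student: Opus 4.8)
The plan is to condition on how many of the $m$ samples land on the $q$th element and then apply the Law of Total Probability, exactly as in the proof of Proposition \ref{m-rec}, but slicing by a count rather than by the last draw. First I would introduce a two-stage description of a single draw from $\chi$: with probability $p_q$ the sample equals the $q$th element, and with the complementary probability $1-p_q$ it is a draw from the renormalized distribution $\chi/q$ on $[q-1]$. One checks this reproduces $\chi$, since for $i<q$ the resulting probability is $(1-p_q)\cdot\frac{p_i}{1-p_q}=p_i$. Running this two-stage experiment independently $m$ times is therefore equivalent to drawing $m$ i.i.d.\ samples from $\chi$.

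Next, let $K$ denote the number of the $m$ samples that fall into the first category (equal to the $q$th element). By construction $K$ is binomially distributed, so
\[\Pr(K=k)=\binom{m}{k}p_q^k(1-p_q)^{m-k}.\]
The key structural observation is that the samples smear on $[q]$ if and only if two things happen: the $q$th element is hit at least once (i.e.\ $K\ge 1$), and the remaining $m-K$ samples --- which are precisely the second-stage draws --- cover all of $[q-1]$. Conditioned on $K=k$, these remaining $m-k$ draws are, by the independence of the two stages, i.i.d.\ samples from $\chi/q$; hence the conditional probability that they smear on $[q-1]$ is exactly $P_{\chi/q}(m-k,q-1)$.

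Putting these together through the Law of Total Probability gives
\[P_{\chi}(m,q)=\sum_{k=1}^{m}\binom{m}{k}p_q^k(1-p_q)^{m-k}\cdot P_{\chi/q}(m-k,q-1),\]
where the sum starts at $k=1$ because $K=0$ forces the $q$th coupon to be uncollected, so that term contributes nothing. Finally I would note that any term with $m-k<q-1$ vanishes, since fewer than $q-1$ draws can never cover $q-1$ distinct elements (the base case $P_{\chi/q}(m',q-1)=0$ for $m'<q-1$). The condition $m-k<q-1$ is exactly $k>m-q+1$, so the summation truncates at $k=m-q+1$, recovering the stated range.

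The main obstacle, and the one point deserving genuine care, is the conditional-independence claim in the third paragraph: that conditioning the $m$ i.i.d.\ draws on the event $\{K=k\}$ leaves the non-$q$ samples distributed as $m-k$ independent draws from $\chi/q$. The two-stage reformulation is precisely what makes this transparent --- the information ``which samples are of type $q$'' is carried entirely by the Bernoulli layer and is independent of the second-stage values --- so the remainder of the argument is a routine application of total probability together with the vanishing of the degenerate terms.
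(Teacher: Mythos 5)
Your proof is correct and follows essentially the same route as the paper's: condition on the binomial count $K$ of samples hitting the $q$th element, observe that given $K=k$ the remaining $m-k$ samples must smear on $[q-1]$ under $\chi/q$, and apply the Law of Total Probability. Your two-stage resampling argument makes the conditional-independence step more explicit than the paper does, and you truncate the sum by noting the vanishing of terms with $m-k<q-1$ rather than bounding $K$ up front, but these are only presentational differences.
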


\begin{proof}
Let $K$ be a random variable denoting the number of times the $q$th element is picked. For smearing to occur, $K$ must be at least 1 (else the $q^{th}$ element will not be chosen), but cannot be greater than $m-q+1$. This is because there are $m$ samples in total, and one needs at least $q-1$ of them to cover the first $q-1$ elements, leaving a maximum of $m-(q-1)$ available for the $q$th element. Then, by the Law of Total Probability,

\[P_{\chi}(m,q)=\sum_{k=1}^{m-q+1}\Pr(K=k)\cdot\Pr(\text{smearing}|K=k).\]

Since the samples are drawn independently, notice that $K\sim\text{Bin}(m, p_q)$. Therefore,

\[\Pr(K=k)=\binom{m}{k}p_q^k(1-p_q)^{m-k}.\]

On the other hand, the probability that smearing occurs given that the $q^{th}$ element is chosen $k$ times, where $k \in [1,m-q+1])$, is the probability that $m-k$ samples, taken from $\chi/q$, smear on $[q-1]$. 
Hence,

\[\Pr(\text{smearing}|K=k)=P_{\chi/q}(m-k,q-1).\]

Finally, \[P_{\chi}(m,q)=\sum_{k=1}^{m-q+1}\binom{m}{k}p_q^k(1-p_q)^{m-k}\cdot P_{\chi/q}(m-k,q-1).\]

\end{proof}

Using Proposition \ref{q-rec} as a basis for a recursive method of calculating $P_{\chi}(m,q)$ is more efficient than using Proposition \ref{m-rec}. The base cases are very similar to those in Section \ref{m-rec-section}. If $m<q$, then $P_{\chi}(m,q)=0$, and if $q=1$ and $m>0$, then $P_{\chi}(m,q)=1,$ for the same reasons as in the uniform case. To find $P_{\chi}(q,q)$, notice that, as in the uniform case, each sample must pick an element of $[q]$ which has not been picked before. Hence, if $q$ samples smear, they must be a permutation of $[q]$. Each such permutation has a probability of $\prod_{k=1}^q p_k$, where $p_k$ is the probability of picking the $k$th element, and there are $q!$ such permutations, meaning the probability of smearing is
\[P_{\chi}(q,q)=q!\cdot\prod_{k=1}^q p_k.\]
As expected, when $\chi$ is uniform, $p_k=1/q$ for every $k$, so the formula simplifies to the one in Section \ref{m-rec-section}.

Calculating $P_{\chi}(m,q)$ recursively using Proposition \ref{q-rec} along with these base cases results in the computation of $P_{\chi/[j+1,q]}(i,j)$ for each of $1\le j\le q$ and $j\le i\le j+(m-q)$. In turn, the computation of each of these values requires a sum of on the order of $m-q$ terms. Hence, the complexity of this recursive method is on the order of $q\cdot(m-q)^2$. Notice that, as in the recursion-in-$m$ method, this recursion-in-$q$ method results in the calculation of not just $P(m,q)$, but also $P_{\chi}(i,q)$ for all $i\in[0,m]$. Of course, an attacker on PLWE would not have prior knowledge of the non-uniform distribution $\chi$, but such information is nevertheless useful in a retrospective analysis of the effectiveness of the smearing attack (discussed later).

\section{The Smearing Attack}

Here, we build on the previous sections to present a smearing-based attack on Polynomial Learning With Errors, which we call the ``smearing attack.'' 

\subsection{The Uniform Distribution Smears the Best}

A fundamental principle regarding uniform and non-uniform smearing is as follows. We begin with the following lemma.

\begin{lemma}\label{averaging}

Let $\chi$ be a distribution on $[q]$, and let $i\neq j$, be two elements of $[q]$. Let $p_i$ and $p_j$ be the probabilities of selecting $i$ and $j$ respectively. Construct $\chi'$ as follows: take $\chi$, and replace the probabilities of $i$ and $j$ with $(p_i+p_j)/2$. Then,

\[P_{\chi}(m,q)\leq P_{\chi'}(m,q),\]

with equality if and only if $p_i=p_j$.

\end{lemma}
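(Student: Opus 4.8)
The plan is to reduce the inequality to a one-parameter convexity statement by conditioning on how many of the $m$ samples land in the two-element set $\{i,j\}$. Throughout, fix all probabilities of $\chi$ except $p_i$ and $p_j$, set $s=p_i+p_j$, and write $a=p_i/s$ and $b=p_j/s=1-a$ (the case $s=0$ is trivial, since then $i$ can never be covered and both sides equal $0$). In this notation $\chi$ corresponds to the split $a$ and $\chi'$ to the balanced split $a=\tfrac12$, while everything else about the two distributions is identical.

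First I would establish a decomposition. Draw the $m$ samples and let $K$ be the number that land in $\{i,j\}$; then $K\sim\mathrm{Bin}(m,s)$. Conditioned on $K=k$, the $k$ inside samples are i.i.d.\ with $\Pr(i)=a$, $\Pr(j)=b$, while the $m-k$ outside samples are i.i.d.\ from $\chi$ restricted to $[q]\setminus\{i,j\}$ and renormalized, and these two groups are independent. Since $i$ and $j$ can only be covered by inside samples and the other $q-2$ elements only by outside samples, the smearing event factors, given $K=k$, into ``the inside samples cover both $i$ and $j$'' and ``the outside samples cover the remaining $q-2$ elements.'' The first has probability $1-a^k-b^k$ for $k\ge1$ (and $0$ for $k=0$), while the second, call it $g(m-k)$, depends only on $s$ and the \emph{fixed} other probabilities. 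This yields
\[
P_{\chi}(m,q)=\sum_{k=0}^{m}w_k\bigl(1-a^k-b^k\bigr),\qquad w_k:=\binom{m}{k}s^k(1-s)^{m-k}\,g(m-k)\ge0,
\]
where crucially each weight $w_k$ is \emph{independent} of the split $a$, and the $k=0,1$ terms vanish.

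The heart of the argument is then a convexity estimate. Writing $C=\sum_k w_k$ (a constant), the formula becomes $P_{\chi}(m,q)=C-\sum_{k\ge2}w_k\,\phi_k(a)$ with $\phi_k(a)=a^k+(1-a)^k$. For every $k\ge2$ the function $\phi_k$ is symmetric about $a=\tfrac12$ and strictly convex on $[0,1]$ (its second derivative $k(k-1)[a^{k-2}+(1-a)^{k-2}]$ is positive on $(0,1)$), so $\phi_k(a)\ge\phi_k(\tfrac12)$ with equality only at $a=\tfrac12$. Since the $w_k$ are nonnegative, $\sum_{k\ge2}w_k\phi_k(a)$ is minimized at $a=\tfrac12$, which is exactly $\chi'$; hence $P_{\chi}(m,q)\le P_{\chi'}(m,q)$.

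For the equality clause I would track the convexity slack: equality forces $w_k\bigl(\phi_k(a)-\phi_k(\tfrac12)\bigr)=0$ for every $k\ge2$, i.e.\ either $a=\tfrac12$ (that is, $p_i=p_j$) or $w_k=0$ for all $k\ge2$. The main obstacle is therefore the bookkeeping of the equality case: one must verify that some weight $w_k$ with $k\ge2$ is strictly positive, so the second alternative is excluded. This holds whenever $m\ge q$ and all probabilities of $\chi$ are positive, since then $g(m-k)>0$ for a suitable $k\ge2$ and the binomial factor is positive; I would flag explicitly that in genuinely degenerate cases (for example $m<q$, where both sides are $0$) the ``only if'' direction can fail, so the clean statement presumes this nondegeneracy. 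Apart from that, the argument is routine: the only real work is justifying the conditional independence in the decomposition, after which the convexity of $\phi_k$ does everything.
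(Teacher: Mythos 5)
Your proof is correct and follows essentially the same route as the paper's: condition on the number $K$ of samples landing in $\{i,j\}$ (binomial, with the same law under $\chi$ and $\chi'$), factor the smearing event into covering $\{i,j\}$ versus covering the rest, and reduce to the inequality $p_i^k+p_j^k\ge 2\bigl(\tfrac{p_i+p_j}{2}\bigr)^k$ — which you prove by convexity of $a\mapsto a^k+(1-a)^k$ where the paper uses Lagrange multipliers. One point in your favor: your handling of the equality clause is more careful than the paper's, since you correctly observe that when all weights $w_k$ with $k\ge 2$ vanish (e.g.\ $m<q$, or some other probability equal to zero) both sides are $0$ and the ``only if'' direction of the lemma as stated actually fails — a degeneracy the paper's proof silently ignores.
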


\begin{proof}

Define $K$ as the random variable representing the number of samples from the distribution which fall into $\{i,j\}$. Notice that $K\sim \mbox{Bin}(m,p_i+p_j)$, for both $\chi$ and $\chi'$. Let $k$ be a specific instance of $K$, $k\ge 2$. Conditioned on $K=k$, smearing on $\{i,j\}$ is independent from smearing on $[q]/\{i,j\}$, and since the probabilities on $[q]/\{i,j\}$ are unchanged between $\chi$ and $\chi'$, to compare $P_{\chi}(m,q)$ and $P_{\chi'}(m,q)$ it suffices to compare the probabilities of picking at least one of each of $i$ and $j$ for both distributions, restricted to what is going on with these $k$ samples. For $\chi$, this probability is

\begin{align*}
    \Pr(\text{picking both }i\text{ and }j|K=k) 
    &= 1-\Pr(\text{not picking both }i\text{ and }j|K=k)\\
    &= 1-(\Pr(\text{picking only }i|K=k)+\Pr(\text{picking only }j|K=k))\\
    &= 1-\left(\left(\frac{p_i}{p_i+p_j}\right)^k+\left(\frac{p_j}{p_i+p_j}\right)^k\right)\\
    &= 1-\frac{p_i^k+p_j^k}{(p_i+p_j)^k},
\end{align*}

since the probability of picking $i$ out of $\{i,j\}$ is $p_i/(p_i+p_j)$ and similarly with $j$. For $\chi'$, a similar computation shows that

\begin{align*}
    \Pr(\text{picking both }i\text{ and }j|K=k)
    &= 1-\frac{\left(\frac{p_i+p_j}{2}\right)^k + \left(\frac{p_i+p_j}{2}\right)^k}{\left(\left(\frac{p_i+p_j}{2}\right) + \left(\frac{p_i+p_j}{2}\right)\right)^k}\\
    &= 1-\frac{\left(\frac{p_i+p_j}{2}\right)^k + \left(\frac{p_i+p_j}{2}\right)^k}{(p_i+p_j)^k}.
\end{align*}

It remains, thus, to show that 

\[p_i^k+p_j^k\geq \left(\frac{p_i+p_j}{2}\right)^k+\left(\frac{p_i+p_j}{2}\right)^k,\]

with equality if and only if $p_i=p_j$. To show this, consider this as an optimization problem, where we try to minimize the quantity $f(P_i,P_j)=P_i^k+P_j^k$ subject to the constraint $g(P_i,P_j)=p_i+p_j$. By method of Lagrangian multipliers, setting $\nabla f=\lambda\nabla g$ gives

\[kP_i^{k-1}=kP_j^{k-1}=\lambda,\]

which implies that $P_i^k+P_j^k$ is minimized at $P_i=P_j=(p_i+p_j)/2.$ Hence,

\[p_i^k+p_j^k\geq\left(\frac{p_i+p_j}{2}\right)^k+\left(\frac{p_i+p_j}{2}\right)^k,\]

with equality if and only if $p_i=p_j$. This implies that

\[P_{\chi}(m,q|K=k)\leq P_{\chi'}(m,q|K=k)\]

for $k\ge 2$. Then, by the Law of Total Probability, 

\begin{align*}
    P_{\chi'}(m,q)-P_{\chi}(m,q)
    &= \left(\sum_{k=2}^{m-q+2} P_{\chi'}(m,q|K=k)\cdot\Pr(K=k)\right)
    - \left(\sum_{k=2}^{m-q+2} P_{\chi}(m,q|K=k)\cdot\Pr(K=k)\right)\\
    &= \sum_{k=2}^{m-q+2} (P_{\chi'}(m,q|K=k)-P_{\chi}(m,q|K=k))\cdot\Pr(K=k)\\
    &\ge \sum_{k=2}^{m-q+2} 0\cdot\Pr(K=k)\\
    &= 0.
\end{align*}
Therefore,
\[P_{\chi}(m,q)\leq P_{\chi'}(m,q),\]
with equality if and only if $p_i=p_j$, as seen from the optimization problem.
\end{proof}


\begin{theorem}
Let $\chi$ be a probability distribution over $[q]$, and let $U$ be the uniform distribution over $[q]$. Then, 
\[P_{\chi}(m,q)\leq P_U(m,q),\]
with equality if and only if $\chi=U$.
\end{theorem}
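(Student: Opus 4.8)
The plan is to deduce the theorem from Lemma \ref{averaging} by an optimization argument on the probability simplex, the point being that the uniform distribution is the unique maximizer of $P_\chi(m,q)$ over all distributions on $[q]$. First I would identify a distribution $\chi$ on $[q]$ with its probability vector $(p_1,\dots,p_q)$ lying in the simplex
\[
\Delta_q=\Bigl\{(p_1,\dots,p_q): p_k\ge 0,\ \textstyle\sum_{k=1}^q p_k=1\Bigr\},
\]
which is compact. The map $(p_1,\dots,p_q)\mapsto P_\chi(m,q)$ is continuous, since by inclusion–exclusion on the events ``element $k$ is never sampled'' it is the polynomial
\[
P_\chi(m,q)=\sum_{T\subseteq[q]}(-1)^{|T|}\Bigl(1-\sum_{k\in T}p_k\Bigr)^{m}.
\]
Hence $P_\chi(m,q)$ attains a maximum on $\Delta_q$, say at some $\chi^\ast$ with vector $(p_1^\ast,\dots,p_q^\ast)$.

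Next I would show the maximizer must be uniform. Suppose for contradiction that $\chi^\ast\neq U$; then there exist indices $i\neq j$ with $p_i^\ast\neq p_j^\ast$. Applying Lemma \ref{averaging} to this pair produces a distribution $\chi'$ (obtained by replacing both $p_i^\ast$ and $p_j^\ast$ with their average) satisfying $P_{\chi^\ast}(m,q)<P_{\chi'}(m,q)$, where the inequality is \emph{strict} precisely because $p_i^\ast\neq p_j^\ast$. This contradicts the maximality of $\chi^\ast$. Therefore every maximizer has all coordinates equal, i.e.\ equals $U$; since $U$ is the only such point, it is the \emph{unique} maximizer. Consequently $P_\chi(m,q)\le P_U(m,q)$ for every $\chi$, and for $\chi\neq U$ the distribution $\chi$ is not a maximizer, so the inequality is strict; equality holds iff $\chi=U$, giving the equality characterization claimed.

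The main obstacle is the equality clause rather than the bound itself: getting strict inequality for every $\chi\neq U$ requires knowing $U$ is the \emph{unique} maximizer, and this is exactly what the ``equality iff $p_i=p_j$'' refinement of Lemma \ref{averaging} supplies, since it forbids any non-uniform point from being optimal. An alternative route is to observe that Lemma \ref{averaging} is a single Robin–Hood (Dalton) transfer toward equality, so $P_\chi(m,q)$ is a symmetric Schur-concave function of $(p_1,\dots,p_q)$; because $U$ is majorized by every distribution on $[q]$, the Hardy–Littlewood–Pólya theory would then yield the maximum at $U$ directly. I would prefer the compactness argument above, however, as it is self-contained and invokes only the already-proved Lemma \ref{averaging}, avoiding the need to verify that a finite sequence of averaging steps transforms $\chi$ exactly into $U$.
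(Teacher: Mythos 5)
Your proof is correct, but it takes a genuinely different route from the paper's. The paper iterates Lemma \ref{averaging}: starting from $\chi_0=\chi$ it builds a sequence $\{\chi_N\}$ of successive pairwise averagings, asserts that the sequence converges to $U$ with $\{P_{\chi_N}(m,q)\}$ non-decreasing, and passes to the limit; the equality case is handled by noting that a constant probability sequence forces $p_i=p_j$ at every step. You instead make a single extremal argument: $P_\chi(m,q)$ is a polynomial in $(p_1,\dots,p_q)$ by inclusion--exclusion, hence continuous on the compact simplex $\Delta_q$, so a maximizer exists, and Lemma \ref{averaging} (with its strictness clause) shows no non-uniform point can be a maximizer. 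Your version buys something real: the paper's argument leaves unspecified which pairs are averaged at each step, never proves that the iterates actually converge to $U$ under that rule, and silently uses continuity of $\chi\mapsto P_\chi(m,q)$ to conclude $P_{\chi_N}(m,q)\to P_U(m,q)$ --- all of which your compactness argument sidesteps, at the modest cost of writing down the inclusion--exclusion formula explicitly. (Your Schur-concavity remark is also a legitimate alternative framing of the same mechanism.) One caveat applies equally to both proofs: the equality characterization silently assumes $m\ge q$, since for $m<q$ one has $P_\chi(m,q)=P_U(m,q)=0$ for every $\chi$, so the ``only if'' direction fails in that degenerate range; this is a defect inherited from the statement of Lemma \ref{averaging} rather than something your argument introduces.
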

\begin{proof}
From $\chi=\chi_0$, build a new distribution $\chi_{N+1}$ by selecting two elements in $[q]$ from the previous distribution $\chi_N$, and replacing their two probabilities $p_i$ and $p_j$ with $(p_i+p_j)/2$, their average. By Lemma \ref{averaging}, $$P_{\chi_{N+1}}(m,q)\geq P_{\chi_N}(m,q)$$
We construct a sequence $\{\chi_N\}$ such that $\{P_{\chi_N}(m,q)\}_{N=0}^{\infty}$ is a non-decreasing, infinite sequence with limit $P_U(m,q)$. This shows that $P_{\chi}(m,q)\le P_U(m,q)$. 
Furthermore, if $P_{\chi}(m,q)=P_U(m,q)$, then the sequence is constant, meaning $P_{\chi_N}(m,q)=P_{\chi_{N+1}}(m,q)$ for all $N$. By Lemma \ref{averaging}, this is only possible if, for each step, $p_i=p_j$, meaning that $\chi_0=\chi_1=...=\chi_N=....$ Then, $\{\chi_N\}_{N=0}^{\infty}$ is a constant sequence with limit $U$, hence $\chi=U$.  
\end{proof}
This principle is the driving force behind the attack on the Decision-PLWE problem, as described in the following sections.
\subsection{The Smearing Decision Problem}
The foundation of the smearing attack is in what we call the ``smearing decision": given a large number of samples from some probability distribution over $\Z_q$, decide, with some certainty, whether that distribution is the uniform distribution $U$, or a certain non-uniform distribution $\chi$. We do this in the following way.
\begin{enumerate}
\item Choose the parameters $N$, indicating the number of trials to be done, and $m$, the number of samples to be taken per trial. $N$ must be odd, while $m$ must be picked such that $P_U(m,q) > 1/2$ while $P_{\chi}(m,q) < 1/2$. Since $P_U(m,q) > P_{\chi}(m,q)$ for all $m\ge q$, and both $P_U(m,q)$ and $P_{\chi}(m,q)$, as functions of $m$, have range $[0,1)$, such an $m$ exists almost always. 

\item For each trial, take $m$ samples, and check whether they smear on $\Z_q$. If smearing happens for more than half of the $N$ trials, conclude that the samples were taken from the uniform distribution over $\Z_q$. If, on the other hand, the smearing happens for less than half of the $N$ trials, conclude that the samples were taken from $\chi$. 
\end{enumerate}
\begin{figure}
        \includegraphics[width=0.6\textwidth]{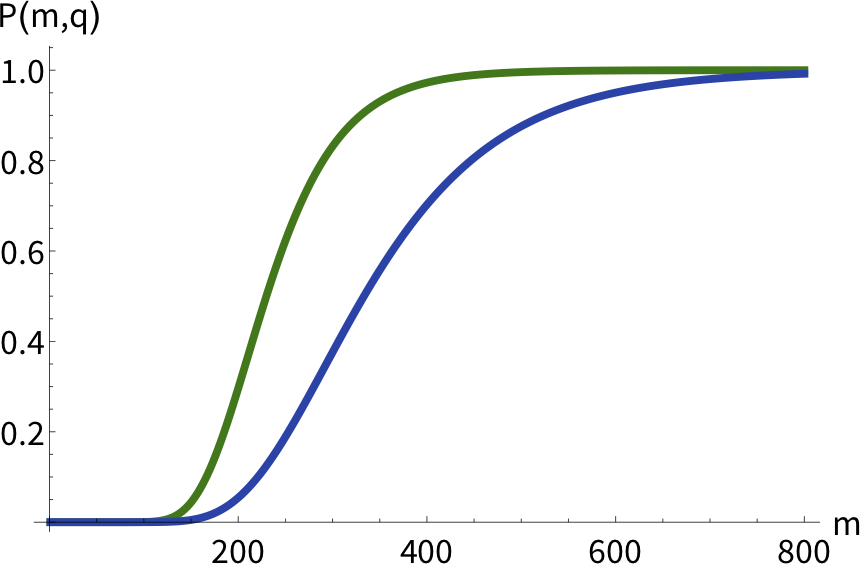} 
        \caption{Probability of smearing for a uniform distribution (green) and a non-uniform distribution (blue), as a function of $m$ ($q=53$, $n=2$, $\sigma=6$, $\gamma=2$).}
        \label{un-vs-chi}
\end{figure}  

\begin{figure}  
        \includegraphics[width=0.6\textwidth]{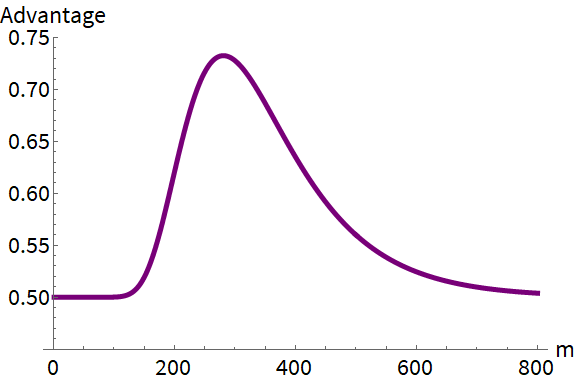}
        \caption{The probability of deciding between them correctly by whether smearing happens or not.}
        \label{advantage}
\end{figure}

To give an intuitive explanation of this decision process, consider the two graphs shown in Figures \ref{un-vs-chi} and Figure \ref{advantage}. Figure \ref{un-vs-chi} shows the smearing probability for the uniform distribution (in green), and a non-uniform distribution (in blue) as a function of $m$. The non-uniform distribution is a mapped Gaussian distribution, with the parameters $q=53$, $n=2$, $\sigma=6$ (the standard deviation of the initial Gaussian), and $\gamma=2$. As expected, for both curves, when the number of samples is small, the probability of smearing is $0$ (or close to $0$), while when the number of samples if large, smearing occurs almost always. The uniform and non-uniform curves can really be differentiated only for some intermediate range of $m$. 

We describe here a simple example of the smearing decision and the smearing attack in general. Suppose that $U$ or $\chi$ have equal probability.
Assume that if smearing happens, the distribution is assumed to be uniform, and if smearing does not happen, the distribution is assumed to be non-uniform. Then, the 
the probability that the decision is correct, is
\begin{align*}
    \Pr(\text{decision is correct}) 
    &= \Pr(\text{decision is correct}|U)\Pr(U) 
    + \Pr(\text{decision is correct}|\chi)\Pr(\chi)\\
    &= \Pr(\text{smearing happens}|U)\cdot\frac{1}{2}
    + \Pr(\text{smearing doesn't happen}|\chi)\cdot\frac{1}{2}\\
    &= \frac{1}{2}(P_U(m,q)+(1-P_{\chi}(m,q))) \\
    &= \frac{1}{2}+\frac{1}{2}(P_U(m,q)-P_{\chi}(m,q)).
\end{align*}
Notice that since $P_U(m,q)>P_{\chi}(m,q)$, the 
probability is strictly greater than half, and increases linearly with the difference in the smearing probabilities of the uniform and non-uniform distributions. The graph of the probabilities 
for different values of $m$ is shown in Figure \ref{advantage}. As expected, the probability is highest when the distance between the two smearing probability curves is the greatest. 


The following proposition formalizes the idea of the smearing decision.
\begin{proposition}\label{N}
Let $U$ be the uniform distribution over $\Z_q$, and $\chi$ be some non-uniform distribution over $\Z_q$. Let $m$ be an integer such that $P_U(m,q) > 1/2$ and $P_{\chi}(m,q) < 1/2$. Then, given arbitrarily small $\alpha, \beta > 0$, there exists an $N$ such that the smearing decision with $N$ trials is correct, in the case where the true distribution is $U$, with probability $1-\alpha$, and in the case where the true distribution is $\chi$, with probability $1-\beta$.
\end{proposition}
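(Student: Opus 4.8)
The plan is to recognize the smearing decision as a majority vote over $N$ independent Bernoulli trials and to control the two error probabilities with a concentration inequality. Write $p_U = P_U(m,q)$ and $p_{\chi} = P_{\chi}(m,q)$, so that by hypothesis $p_{\chi} < 1/2 < p_U$. In each trial the $m$ samples either smear or do not, independently across trials; let $Y$ denote the number of smearing trials when the samples are drawn from $U$, and $Z$ the number when they are drawn from $\chi$. Then $Y \sim \mathrm{Bin}(N, p_U)$ and $Z \sim \mathrm{Bin}(N, p_{\chi})$. Taking $N$ odd, so that no tie is possible, the decision rule declares ``uniform'' precisely when the smearing count exceeds $N/2$; hence the decision is correct under $U$ exactly when $Y > N/2$, and correct under $\chi$ exactly when $Z < N/2$.

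First I would control the error under $U$. Since $\E[Y] = N p_U$ with $p_U > 1/2$, the event $\{Y \le N/2\}$ is a lower-tail deviation of size $N(p_U - 1/2) > 0$, and Hoeffding's inequality gives
\[
\Pr(Y \le N/2) \le \exp\!\left(-2N\left(p_U - \tfrac{1}{2}\right)^2\right),
\]
which tends to $0$ as $N \to \infty$; thus there is an $N_1$ with $\Pr(Y \le N/2) < \alpha$ for all $N \ge N_1$. Symmetrically, under $\chi$ we have $\E[Z] = N p_{\chi}$ with $p_{\chi} < 1/2$, so $\{Z \ge N/2\}$ is an upper-tail deviation of size $N(1/2 - p_{\chi}) > 0$, and Hoeffding yields
\[
\Pr(Z \ge N/2) \le \exp\!\left(-2N\left(\tfrac{1}{2} - p_{\chi}\right)^2\right),
\]
giving an $N_2$ with $\Pr(Z \ge N/2) < \beta$ for all $N \ge N_2$.

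To finish, I would take any odd $N \ge \max(N_1, N_2)$. For this $N$, when the true distribution is $U$ the probability of a correct decision is $\Pr(Y > N/2) = 1 - \Pr(Y \le N/2) > 1 - \alpha$, and when the true distribution is $\chi$ it is $\Pr(Z < N/2) = 1 - \Pr(Z \ge N/2) > 1 - \beta$, exactly as required.

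I do not expect a genuine obstacle here: the argument is a routine concentration-of-measure computation, and even Chebyshev's inequality (using $\Var(Y) = N p_U(1-p_U) \le N/4$) would suffice to establish the existence of $N$, albeit with worse constants. The only points demanding care are bookkeeping ones, namely choosing $N$ odd so that the majority vote never ties, and matching the strict and non-strict inequalities on the two sides so that the two tail bounds cover precisely the complementary ``wrong decision'' events.
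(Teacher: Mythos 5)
Your proof is correct and follows essentially the same route as the paper's: model the number of smearing trials as a binomial random variable, bound each one-sided error probability by a concentration inequality so that it tends to $0$ as $N\to\infty$, choose $N_1$ and $N_2$ for the two cases, and take the maximum. The only difference is that you invoke Hoeffding's inequality where the paper uses Chebyshev's (which, as you yourself note, also suffices); this gives a sharper exponential tail bound but changes nothing structurally.
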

\begin{proof}
Consider the case in which the unknown distribution about which the decision is being made is actually $U$. Define $X$ to be a random variable denoting the number of trials for which the samples smear. In this case, for each trial, smearing happens with probability $P_U(m,q)$ (which we denote as simply $P_U$ for convenience), and the trials are independent from one another. Hence, $X \sim \text{Bin}(N, P_U)$, so $\E(X)=NP_U$ and $\Var(X)=NP_U(1-P_U)$. In this case, the probability that the smearing decision is incorrect is the probability that fewer than $N/2$ trials smear. Using Chebyshev's Inequality, which states that for a random variable $X$, and any $c>0$,
\[\Pr(|X-\E(X)|\ge c\sigma)\le\frac{1}{c^2},\]
we conclude that 
\begin{align*}
    \Pr(\text{decision is incorrect}|U) &= \Pr\left(X < \frac{N}{2}\right)\\
    &= \Pr\left(NP_U-X > N\left(P_U-\frac{1}{2}\right)\right)\\
    &\le \Pr\left(|NP_U-X| \ge N\left(P_U-\frac{1}{2}\right)\right)\\
    &\le \frac{P_U(1-P_U)}{N\left(P_U-\frac{1}{2}\right)^2}.
\end{align*}
Hence,
\[\lim_{N\rightarrow\infty}\Pr(\text{decision is incorrect}|U)=0,\]
so, in particular, we can choose  $N_1$ large enough such that this quantity is less than $\alpha$. On the other hand, in the case where the unknown distribution is $\chi$, $X\sim\text{Bin}(N,P_{\chi})$ (where we denote $P_{\chi}(m,q)$ by $P_{\chi}$ for convenience), the decision is incorrect whenever smearing happens in more than $N/2$ trials. By a similar argument as above,
\begin{align*}
    \Pr(\text{decision is incorrect}|\chi) &= \Pr\left(X > \frac{N}{2}\right)\\
    &= \Pr\left(X-NP_{\chi}>N\left(\frac{1}{2}-P_{\chi}\right)\right)\\
    &\le \Pr\left(|X-NP_{\chi}|\ge N\left(\frac{1}{2}-P_{\chi}\right)\right)\\
    &\le \frac{P_{\chi}(1-P_{\chi})}{N\left(\frac{1}{2}-P_{\chi}\right)^2}.
\end{align*}
Thus, 
\[\lim_{N\rightarrow\infty}\Pr(\text{decision is incorrect}|\chi)=0,\]
In particular, choose $N_2$ large enough such that this quantity is less than $\beta$. Lastly, choose $N=\max(N_1, N_2)$, which then satisfies both conditions of the proposition.
\end{proof}


\subsection{The Smearing Attack}
The proposed smearing attack on the Decision-PLWE problem proceeds as follows. Assume that we know $P_q$ (and a root $\gamma$ of the polynomial $f(x)$), and have access to a large number of samples $(a_i, b_i) \in P_q \times P_q$. 

{\bf Algorithm.} The algorithm for the smearing attack works as follows:
\vspace{-0.1in}
\begin{enumerate}
\item Choose the parameters $m$ and $N$ to achieve the desired error probabilities $\alpha$ and $\beta$.
\item For all possible guesses $g\in \Z_q$ for the true value of $s(\gamma)$, make the following smearing decision: for each of $N$ trials, draw $m$ samples $(a_i, b_i)$, compute the set $S= \{b_i(\gamma)-a_i(\gamma)\cdot g\}$, and determine whether $S$ smears, i.e. $S=\Z_q$.
\begin{enumerate}
    \item If smearing occurs in more than half of the trials ($>N/2$ trials), conclude that the error distribution resulting from the guess $g$ is uniform.
    \item Else conclude that the error distribution from the guess $g$ is non-uniform (in fact, it is the mapped Gaussian distribution). 
\end{enumerate}
\item Make a decision about the sample distribution:
\begin{enumerate}
 \item If the error distribution is uniform for all values of $g\in\Z_q$, conclude that the samples $(a_i, b_i)$ originally came from a uniform distribution.
 \item Else, if the error distribution is uniform for all but one values of $g$, conclude that the samples $(a_i, b_i)$ originally came from the PLWE distribution. In this case, the value of $g$ which gives a non-uniform error distribution is likely to be the true value of the secret $s$ evaluated at $\gamma$. 
 \item If the error distribution is non-uniform for more than one value of $g$, choose better values for the parameters $m$ and $N$ and repeat the steps of the algorithm.
\end{enumerate}
\end{enumerate}

If $(a_i, b_i)$ actually came from the uniform distribution over $P_q\times P_q$, then, for any value of $g\in\Z_q$, the values of $b_i(\gamma)-a_i(\gamma)\cdot g$ would also be uniformly distributed over $\Z_q$. If, on the other hand, $(a_i, b_i)$ came from the PLWE distribution, but $g$ was an incorrect guess for $s(\gamma)$, then the $b_i(\gamma)-a_i(\gamma)\cdot g$ values would also be uniformly distributed over $\Z_q$. Finally, if $(a_i, b_i)$ came from the PLWE distribution, and $g$ was the correct guess for $s(\gamma)$, only in this case would the $b_i(\gamma)-a_i(\gamma)\cdot g$ values follow the true distribution of the PLWE error term $e$ when mapped to $\Z_q$ by evaluating $e$ at $x=\gamma$.

For well-chosen parameters $m$ and $N$, the smearing attack described above correctly distinguishes between the PLWE and the uniform distributions over $P_q\times P_q$ almost always, as formalized the following proposition.

\begin{proposition}
Let $\alpha$ and $\beta$ be respectively the Type 1 and Type 2 errors of the smearing decision over $\Z_q$. Then, if the true distribution of the samples $(a_i, b_i)$ is the uniform distribution over $P_q\times P_q$, the smearing attack gives the correct decision with probability
\[\frac{1-\alpha}{1+(q-1)\alpha}.\]
If the true distribution of the samples $(a_i, b_i)$ is the PLWE distribution over $P_q\times P_q$, the smearing attack gives the correct decision with probability
\[\frac{1-\alpha-\beta+q\alpha\beta}{1-\alpha+(q-1)\alpha\beta}.\]
\end{proposition}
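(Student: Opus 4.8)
The plan is to model the attack as the outcome of $q$ independent smearing decisions, one for each candidate guess $g\in\Z_q$, and to recognize that the two target expressions are \emph{conditional} probabilities, conditioned on the attack reaching a definitive verdict (case 3(a) or 3(b)) rather than looping through case 3(c). Since fresh samples are drawn for each guess, the $q$ decisions are independent, and the behavior of each is governed entirely by $\alpha$ and $\beta$, namely $\Pr(\text{decide non-uniform}\mid U)=\alpha$ and $\Pr(\text{decide uniform}\mid\chi)=\beta$. First I would record why, for the correct guess $g=s(\gamma)$, the residuals $b_i(\gamma)-a_i(\gamma)g$ equal $e_i(\gamma)$ and hence follow $\chi$, whereas for any wrong guess (or under the globally uniform distribution) they equal a nonzero multiple of a uniform element plus independent noise and are therefore uniform over $\Z_q$ --- exactly the dichotomy already argued immediately after the algorithm.

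For the uniform case, every one of the $q$ decisions tests a genuinely uniform distribution, so the number $Y$ of decisions returning ``non-uniform'' satisfies $Y\sim\text{Bin}(q,\alpha)$. The attack is definitive precisely when $Y\le 1$, and correct (returns ``uniform'') precisely when $Y=0$. Thus the sought probability is
\[
\frac{\Pr(Y=0)}{\Pr(Y=0)+\Pr(Y=1)}=\frac{(1-\alpha)^q}{(1-\alpha)^q+q\alpha(1-\alpha)^{q-1}},
\]
and dividing numerator and denominator by $(1-\alpha)^{q-1}$ collapses this to $\tfrac{1-\alpha}{1+(q-1)\alpha}$.

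For the PLWE case I would split the guesses into the single correct one $g^\ast=s(\gamma)$, which tests $\chi$ and returns ``non-uniform'' with probability $1-\beta$, and the $q-1$ incorrect ones, each testing a uniform distribution and returning ``non-uniform'' with probability $\alpha$. Writing $Z$ for the total number of ``non-uniform'' verdicts, the attack is definitive when $Z\le 1$ and correct (returns ``PLWE'') when $Z=1$. Enumerating the disjoint ways to realize small $Z$ gives $\Pr(Z=0)=\beta(1-\alpha)^{q-1}$ and $\Pr(Z=1)=(1-\beta)(1-\alpha)^{q-1}+\beta(q-1)\alpha(1-\alpha)^{q-2}$, where the two summands in $\Pr(Z=1)$ correspond to the lone ``non-uniform'' coming from $g^\ast$ versus from one of the wrong guesses while $g^\ast$ is missed. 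Forming $\Pr(Z=1)/(\Pr(Z=0)+\Pr(Z=1))$ and cancelling $(1-\alpha)^{q-2}$ yields numerator $(1-\beta)(1-\alpha)+(q-1)\alpha\beta=1-\alpha-\beta+q\alpha\beta$ and denominator $(1-\alpha)+(q-1)\alpha\beta$, which is the claimed expression.

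The conceptual crux --- and the step most likely to trip one up --- is the correct framing: realizing that the formulas are conditioned on a definitive outcome, and that in the PLWE case the event $\{Z=1\}$ must be decomposed into the two sub-cases above rather than treated as a single Binomial term. Once that bookkeeping is fixed, the remaining work is only the two algebraic simplifications, both of which reduce cleanly after factoring out the common power of $(1-\alpha)$. A secondary point worth stating explicitly is the independence of the $q$ per-guess decisions, which rests on drawing fresh samples for each $g$; without it the product forms above would not be valid.
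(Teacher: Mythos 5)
Your proposal is correct and follows essentially the same route as the paper: condition on the attack reaching a definitive verdict, treat the uniform case via $Y\sim\mathrm{Bin}(q,\alpha)$ with success iff $Y=0$, and in the PLWE case decompose $\{Z=1\}$ into the verdict coming from the correct guess versus from a wrong guess, yielding exactly the paper's three enumerated outcomes. Your explicit statements of the conditioning and of the independence of the $q$ per-guess decisions are points the paper leaves implicit, but the argument is the same.
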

\begin{proof}
First, assume that the original distribution over $P_q\times P_q$ is uniform. Then, we are concerned with only two outcomes: that all $q$ decisions indicate uniform distribution, or that one of the $q$ decisions indicates non-uniform distribution, while the rest indicate uniform. 

In this case, the probability that all $q$ decisions indicate that the error distribution is uniform is the probability that all the smearing decisions are correct, which is $(1-\alpha)^q$. On the other hand, the probability of one of the decisions indicating that the error distribution is non-uniform, while the rest indicating uniform is the probability that one of the smearing decisions is incorrect, while the rest are correct, which is $q\alpha(1-\alpha)^{q-1}$. In the first outcome, the attack would indicate that the distribution is uniform, while in the second outcome, the attack would indicate that the distribution is the PLWE distribution. Thus, the probability that in this case the attack is successful is

\[\frac{(1-\alpha)^q}{(1-\alpha)^q+q\alpha(1-\alpha)^{q-1}}
=\frac{1-\alpha}{1+(q-1)\alpha}.\]

On the other hand, assume that the original distribution over $P_q\times P_q$ is the PLWE distribution. In this case, for $q-1$ of the guesses for $g$, the error distribution will be uniform, while for the correct guess, the error distribution will be the mapped Gaussian distribution. Correspondingly, there are three possible outcomes:
\vspace{0.1in}
\begin{enumerate}
    \item All $q$ of the smearing decisions could indicate that the error distribution over $\Z_q$ is uniform, which means that $q-1$ of the smearing decisions were correct with choosing the uniform distribution, while the smearing decision corresponding to the correct guess was incorrect in choosing the uniform distribution. The probability of this outcome is $(1-\alpha)^{q-1}\beta$.
    \item The $q-1$ of the smearing decisions could indicate that the error distribution is uniform, while the smearing decision corresponding to the correct guess could indicate that the error distribution is non-uniform. This means all smearing decisions were done correctly, the probability of which is $(1-\alpha)^{q-1}(1-\beta)$.
    \item The $q-1$ of the smearing decisions could indicate that the error distribution is uniform, while one smearing decision could indicate that the error distribution is non-uniform, but that smearing decision is the one which corresponds to a wrong guess $g$. In this case, $q-2$ of the smearing decisions need to identify a uniform distribution correctly, the one corresponding to the right guess $g$ needs to fail in identifying a non-uniform distribution, and one smearing decision corresponding to a wrong guess for $g$ needs to fail in identifying a uniform distribution. The probability of this happening is $(q-1)(1-\alpha)^{q-2}\alpha\beta$.
\end{enumerate}
The first outcome gives an incorrect smearing decision, while the second and third would give the correct smearing decision. Therefore, the probability that the attack is successful if the original distribution is the PLWE distribution is 
\[
\frac{(1-\alpha)^{q-1}(1-\beta)+(q-1)(1-\alpha)^{q-1}\alpha\beta}{(1-\alpha)^{q-1}\beta+(1-\alpha)^{q-1}(1-\beta)+(q-1)(1-\alpha)^{q-2}\alpha\beta}
= \frac{1-\alpha-\beta+q\alpha\beta}{1-\alpha+(q-1)\alpha\beta}.\]
\end{proof}

\section{Conclusion}
In this work, we characterize the probability of smearing, and developing a smearing-base attack on the Decision PLWE problem. By extension, our analysis also bears influence on the security of the RLWE problems through the RLWE to PLWE reduction \cite{RLWE for NT} (for a specific class of rings). As an extension of this work, we may consider adapting our attack to RLWE and investigating its practical implementation.

\section*{Acknowledgements}
This research was supported by the National Science Foundation under the grant DMS-1659872.

\end{document}